\documentclass[12pt,reqno,toc=bibliography,]{scrartcl}%{amsart}
\usepackage[margin=4cm
]{geometry}
\makeatletter
\DeclareOldFontCommand{\rm}{\normalfont\rmfamily}{\mathrm}
\DeclareOldFontCommand{\sf}{\normalfont\sffamily}{\mathsf}
\DeclareOldFontCommand{\tt}{\normalfont\ttfamily}{\mathtt}
\DeclareOldFontCommand{\bf}{\normalfont\bfseries}{\mathbf}
\DeclareOldFontCommand{\it}{\normalfont\itshape}{\mathit}
\DeclareOldFontCommand{\sl}{\normalfont\slshape}{\@nomath\sl}
\DeclareOldFontCommand{\sc}{\normalfont\scshape}{\@nomath\sc}
\makeatother

\usepackage[utf8]{inputenc}

%\usepackage[color={[gray]{0.91}}]{draftwatermark}
%\SetWatermarkText{DRAFT}
%\SetWatermarkScale{7}

\usepackage[english]{babel}

\usepackage{physics}
\usepackage{braket}
\usepackage{dsfont}
\usepackage[table,xcdraw]{xcolor}
\usepackage{multirow}
\usepackage{graphicx}
\usepackage{tikz}
\usepackage{amsmath}
\usepackage{amssymb}
\usepackage{amsthm}
\usepackage{mathtools}
\usetikzlibrary{matrix}
\usepackage{appendix}

%------------- new: ---------
\usepackage{xcolor}
\usepackage{bbm} %!
\newcommand{\1}{\mathbbm{1}} %!

\newtheoremstyle{break}
  {5pt}{5pt}%
  {\itshape}{}%
  {\bfseries}{}%
  {\newline}{}%
\theoremstyle{break}

\usepackage[color]{changebar}

\usepackage{tocbasic}

%------------------------------

\newtheorem{theorem}{Theorem}
\newtheorem{apptheorem}{Theorem}[section]
\newtheorem{lemma}{Lemma}
\newtheorem{corollar}{Corollary}
\newtheorem{appcor}[apptheorem]{Corollary}

\newtheorem{definition}{Definition}
\newtheorem{prop}{Proposition}

\newcommand{\bbC}{\mathbb{C}}

\newcommand{\bbR}{\mathbb{R}}

\newcommand{\bbN}{\mathbb{N}}
\newcommand{\bbQ}{\mathbb{Q}}
\newcommand{\bbP}{\mathbb{P}}
\newcommand{\bbZ}{\mathbb{Z}}

\newcommand{\calS}{\mathcal{S}}
\newcommand{\calU}{\mathcal{U}}

\newcommand{\calG}{\mathcal{G}}
\newcommand{\calD}{\mathcal{D}}

\newcommand{\calC}{\mathcal{C}}

\newcommand{\note}[1]{}

\setkomafont{author}{\sffamily}
\setkomafont{title}{\sffamily}
\makeatletter
\patchcmd{\@maketitle}{\huge}{\LARGE}{}{}
\makeatother

\RedeclareSectionCommand[tocbeforeskip=1ex plus 1pt minus 1pt]{section}
\makeatletter
\renewcommand*{\@pnumwidth}{1em}
\makeatother

\deftocheading{toc}{} % gets rid of "Contents" heading

\title{\vspace*{-22pt}Hay from the haystack: explicit examples of exponential quantum circuit complexity}

\usepackage{authblk}

\author{Yifan Jia,\ Michael M. Wolf}
%\author[]{Michael M. Wolf}

\affil{\small Department of Mathematics, Technische Universit\"at M\"unchen, Germany\\
Munich Center for Quantum Science and Technology (MCQST), Germany\\
{\footnotesize\texttt {\{yifan.jia, m.wolf\}@tum.de}}}
%\affil[]{Munich Center for Quantum Science and Technology (MCQST), Germany}
%\affil[ ]{\texttt {\{yifan.jia,m.wolf\}@tum.de}}

\date{\vspace*{-30pt}\small \vspace*{-12pt} }
%\phantom{asl}\textit{Keywords:} quantum circuit complexity,  circuit lower bound, transcendence degree, maximally coherent states, diagonal unitaries, tensor networks

\begin{document}  

\maketitle

\tableofcontents

\begin{abstract}\vspace*{5pt} \noindent{\bf Abstract: }\small
The vast majority of quantum states and unitaries have circuit complexity exponential in the number of qubits. In a similar vein, most of them also have exponential minimum description length, which makes it difficult to pinpoint examples of exponential complexity. In this work, we construct examples of constant  description length but exponential circuit complexity. We provide infinite families such that each element requires an exponential number of two-qubit gates to be generated exactly from a product and where the same is true for the approximate generation of the vast majority of elements in the family.  The results are based on sets of large transcendence degree and discussed for tensor networks, diagonal unitaries and maximally coherent states.\end{abstract}

\section{Introduction}

In 1949 a simple counting argument led Claude Shannon to the observation that almost every Boolean function $f:\{0,1\}^n\rightarrow\{0,1\}$ requires an exponential size Boolean circuit \cite{shannon1949synthesis}: the number of functions simply outgrows the number of circuits of sub-exponential size. As simple as this argument is, it appears to be hopelessly difficult to find explicit examples for which a better-than-linear lower bound can be proven. Even a $4n$ linear lower bound is currently out of reach \cite{blum1983boolean,B3n}. Apart from more sophisticated (and more severe) reasons there is an obvious obstacle to pinpointing `exponential hay' in this haystack:  for most functions not only the circuit complexity but also the descriptive complexity is exponential. That is, we are not able to talk about them efficiently. 

The quantum world exhibits a similar problem. It is textbook knowledge  that most states and unitaries on an $n$-partite Hilbert space have exponential quantum circuit complexity (cf. p.198 in \cite{nielsen2002quantum} and \cite{knill1995approximation}). In fact, the very possibility of this exponential size played a crucial role in the birth of a  vastly growing research direction \cite{Susskind}. Yet, as in the classical case, while linear lower bounds are known (e.g., for states with topological order \cite{bravyihastingsverstraete,nospeedupdiss}), super-linear lower bounds remain elusive.

The aim of this paper is to provide explicit examples of exponential quantum circuit complexity. In order to achieve this, we follow a simple idea: small circuits involve a small number of parameters and can thus only lead to matrices with a small number of `independent entries'---made precise by the \emph{transcendence degree}. Large transcendence degree therefore implies large circuit complexity. The transcendence degree is a tool that is well-known and -established  in algebraic complexity theory and was introduced to the analysis of arithmetic circuits and formulas in \cite{Trdegorigin2} and \cite{Trdegorigin1}.

The price one has to pay when resorting to such an algebraic tool, is that it is difficult to obtain results that are reasonably stable under perturbations. For this reason, we will first focus on \emph{exact} generation of states and unitaries. In the second part of the paper,  we will then extend the focus and discuss results on approximate generation, but those will only achieve our goal with some caveats. As pointed out in \cite{aaronson2016complexity}, stronger results would immediately imply complexity theoretic separations such as $\mathsf{BQP}\neq\mathsf{P}^{\mathsf{\#P}}$, at least when obtained for states.\medskip

After introducing some basic notions and concepts, we will first study the transcendence degree and its relation to circuit complexity with some excursions to tensor networks. Then we will construct examples of large transcendence degree and thus large circuit complexity and finally look into some hardness-of-approximation implications.

\section{Preliminaries}\label{preliminaries}
  \subsection{Circuit complexity}\label{sec:prelim1}
   %start with introducing the gate library
   In this section, we introduce some fundamental notions and definitions concerning circuit complexity. The \emph{quantum circuit complexity} of a state or unitary  on an $n$-partite Hilbert space quantifies the number of local building blocks that are required for their implementation. The Hilbert space that we will consider is $(\bbC^d)^{\otimes n}\simeq\bbC^{d^n}$ for arbitrary $d\in\bbN$. That is, the local system will be a qu$d$it and the `local building blocks' will be one or two-qudit gates. More specifically, we fix a \emph{gate set} $\calG\subset U(d^2)$ that is assumed to be \emph{ universal} in the sense that the group it generates is dense in $U(d^2)$.\footnote{From a physical perspective it suffices to generate a dense subset of the projective unitary group $U(d^n)/U(1)$. However, we will mostly disregard this quotient since it would make many of the statements more cumbersome without changing their essence. In particular, it would not change the $n$-dependence of the results since a global phase can always be changed locally.} This means that for any $U\in U(d^n)$ and any $\epsilon>0$ there is a finite sequence of unitaries $U_1,\ldots,U_g\in U(d^n)$ such that 
   \begin{equation}\label{eq:Uapprox}
       \|U-U_1U_2\cdots U_g\|_\infty\leq \epsilon,
   \end{equation}
   where each $U_i$ is up to a permutation of tensor factors of the form $U_i\simeq V_i\otimes\1_d^{\otimes(n-2)}$ with $V_i\in\calG$. The smallest number $g$ for which this is possible then defines the quantum circuit complexity $\calC_\epsilon(U,\calG)$. We will often consider the entire unitary group $U(d^2)$ as a gate set and write $\calC_\epsilon(U):=\calC_\epsilon(U,U(d^2))$. Clearly, $\calC_\epsilon(U)\leq \calC_\epsilon(U,\calG)$ holds for any $\calG\subset U(d^2)$. By the Solovay-Kitaev theorem \cite{Kitaevbook}, an inequality in the opposite direction can be proven as well.
   %\footnote{In order to exploit the Solovay-Kitaev theorem one has to restrict to unitaries of determinant one though. A global phase, however, is neither physically relevant nor does  it change the essence of our later arguments.} 
   That is, if an $n$-qudit unitary is hard to approximate  with any finite universal gate set, then the same is true with the gate library consisting of all one- and two-qudit gates.
   This well-known fact is spelled out in the following Lemma for the case of \emph{efficiently universal} gate sets, i.e., those which enable an $\epsilon$-approximation within $SU(d^2)$ with $g=O(\ln 1/\epsilon)$.  Efficiently universal gate sets  with cardinality $|\calG|\leq 3(d^2-1)$ were constructed in \cite{harrow2002efficient}.

   \begin{lemma}[Change of gate sets]\label{lem:SK}
    For every efficiently universal  gate set $\calG\subset SU(d^2)$ there is a constant\footnote{A more explicit expression for $c$ can be found  in Eq.(14) of \cite{harrow2002efficient}.} $c$ such that for every $U\in SU(d^n)$ and $\epsilon>0$:
    \begin{equation}\label{eq:SKbound}
        \calC_{2\epsilon}(U,\calG)\leq c\;\calC_\epsilon(U)\; \ln\frac{\calC_\epsilon(U)}{\epsilon}.
    \end{equation}
   \end{lemma}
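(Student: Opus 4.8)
The plan is to run the classical Solovay--Kitaev substitution argument: take an optimal circuit for $U$ over the continuous gate library, replace each two-qudit building block by a short word over $\calG$, and control the accumulated error. Concretely, I would set $g:=\calC_\epsilon(U)$ and fix $U_1,\dots,U_g$ with $U_i\simeq V_i\otimes\1_d^{\otimes(n-2)}$, $V_i\in U(d^2)$, such that $\|U-U_1\cdots U_g\|_\infty\le\epsilon$. Writing $V_i=e^{i\phi_i}W_i$ with $W_i\in SU(d^2)$ splits off a global phase $e^{i\sum_i\phi_i}$, which is irrelevant for circuit complexity (cf.\ the footnote on the projective group), so it suffices to approximate each special-unitary part $W_i$.

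By the very definition of efficient universality there is a constant $a$ depending only on $\calG$ (hence only on $d$) such that every $W\in SU(d^2)$ admits an $\eta$-approximation by a product of at most $a\ln(1/\eta)$ elements of $\calG$; this is the quantitative statement established in \cite{harrow2002efficient}. Applying it with $\eta:=\epsilon/g$ to each $W_i$ produces words $P_i$ over $\calG$ of length $\le a\ln(g/\epsilon)$ with $\|W_i-P_i\|_\infty\le\epsilon/g$. Since the operator norm is stable under tensoring with identities and under permutation of tensor factors, the lifted gate $P_i\otimes\1_d^{\otimes(n-2)}$, acting on the same pair of qudits as $U_i$, obeys $\|V_i\otimes\1_d^{\otimes(n-2)}-e^{i\phi_i}(P_i\otimes\1_d^{\otimes(n-2)})\|_\infty\le\epsilon/g$.

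For the error analysis I would invoke the telescoping estimate $\|A_1\cdots A_g-B_1\cdots B_g\|_\infty\le\sum_{i=1}^g\|A_i-B_i\|_\infty$, valid for unitaries $A_i,B_i$; replacing every $U_i$ by $P_i\otimes\1$ (the discarded phases only contributing a harmless overall factor) thus perturbs the product by at most $g\cdot(\epsilon/g)=\epsilon$. Together with $\|U-U_1\cdots U_g\|_\infty\le\epsilon$ and the triangle inequality, the assembled circuit over $\calG$ is $2\epsilon$-close to $U$ and consists of at most $\sum_{i=1}^g a\ln(g/\epsilon)=a\,g\ln(g/\epsilon)=a\,\calC_\epsilon(U)\ln\!\big(\calC_\epsilon(U)/\epsilon\big)$ gates. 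Choosing $c$ slightly above $a$ to absorb the degenerate regimes (where $\calC_\epsilon(U)\le 1$ or $\ln(\calC_\epsilon(U)/\epsilon)$ is non-positive) then yields \eqref{eq:SKbound}.

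The argument is essentially routine; the two points needing care are (i) the bookkeeping of global phases, so that one genuinely stays inside $SU$ throughout — alternatively one simply works in $U(d^2)/U(1)$ as the footnote permits, which trivializes this issue — and (ii) extracting from \cite{harrow2002efficient} the precise $O(\ln 1/\eta)$ gate count with a constant $a$ that is \emph{uniform} in $U$, $n$ and $\epsilon$. The $n$-independence of $a$ is the conceptually important feature, but it comes for free: every substitution is local and the operator norm does not see the spectator tensor factors.
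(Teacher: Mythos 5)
Your proof is correct and takes essentially the same route as the paper: decompose $U$ into $g=\calC_\epsilon(U)$ two-qudit gates, Solovay--Kitaev-substitute each by a $\calG$-word of length $O(\ln(g/\epsilon))$ with per-gate error $\epsilon/g$, telescope the errors, and combine with the original $\epsilon$ by the triangle inequality. The only material difference is that you spell out the $U(d^2)\to SU(d^2)$ phase bookkeeping and the degenerate small-$g$ regime, which the paper leaves implicit via its footnote and remark (ii).
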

   \emph{Remarks:} (i) It is unclear how restrictive the assumption is for $\calG$ to be efficiently universal, as opposed to merely universal within $SU(d^2)$. In the latter case, a bound similar to Eq.(\ref{eq:SKbound}) holds if the logarithm is taken to some power $p$ \cite{varju2012random}. Depending on assumptions (e.g. whether $\calG$ includes inverses) proven bounds for $p$ range from $p= 3$ \cite{inversefree1} over $p=3.97$ \cite{dawson2005solovay} to $p=O(\ln d)$ \cite{bouland2021efficient}. (ii) The restriction to the special (or projective) unitary group is necessary for the Solovay-Kitaev theorem. However, since a global phase can be changed locally, without additional $n$-dependence, the distinction between $SU(d^n)$ and $U(d^n)$ is not relevant for our purpose.
   \begin{proof}
    $\calC_\epsilon(U)=g$ implies  Eq.(\ref{eq:Uapprox}) with each of the $U_i$ acting non-trivially only on at most two qudits. Approximating each of the $U_i$'s in turn via  $\|U_i-U_{i,1}\cdots U_{i,m}\|_\infty\leq\nu$ (for some $\nu$ to be chosen later) by  $m=c\ln(1/\nu)$ unitaries from $\calG$,
    we obtain  
    \begin{equation}\label{eq:Uprodapprox}
        \Big\|U-\prod_{i=1}^g\prod_{\alpha=1}^m U_{i,\alpha}\Big\|_\infty \leq \epsilon+g \nu
    \end{equation}
    when reading the product in the right order. Eq.(\ref{eq:Uprodapprox}) then proves that $\calC_{\epsilon+g\nu}(U,\calG)\leq mg$, which implies the claimed inequality when inserting $m$ and choosing $\nu=\epsilon/g$.
   \end{proof}

    If the entire unitary group $U(d^2)$ is chosen as gate library, then even in the exact case ($\epsilon=0$)  every unitary $U\in U(d^n)$  has finite circuit complexity  $\calC_0(U)=O (d^{2n})$ \cite{optimalquditcircuits}.
    %\note{can this be made more explicit (without $O$)?}
    Moreover, since the set $\{V\in U(d^n)|\;\calC_0(V)< g\}$ is compact, we have that 
    \begin{equation}
        \calC_0(U)=g \quad\Rightarrow\quad\exists\epsilon>0:\;\calC_\epsilon(U)=g.
    \end{equation}
    Here we can choose any positive $\epsilon<\inf\{\|U-V\|_\infty|\; V\in U(d^n)\wedge \calC_0(V)< g\}$ since compactness of the variational set together with the continuity of the norm guarantee that the infimum is non-zero.
    
    The definition of the mentioned quantum circuit complexities is easily generalized from unitaries to states: the circuit complexity of a state is  defined as the minimum exact circuit complexity of all unitaries that generate the considered state from a product state (up to $\epsilon$). In particular, for a pure state given by a unit vector $|\varphi\rangle\in(\bbC^d)^{\otimes n}$:
    \begin{equation}
        \calC_0(|\varphi\rangle ):=\min\left\{\calC_0(U)\;|\;\exists U\in U(d^n):|\varphi\rangle=U(|0\rangle)^{\otimes n}\right\}.
    \end{equation}
    Similarly, for a density operator $\rho $ acting on $ (\bbC^{d})^{\otimes n}$ we define
      \begin{equation}
        \calC_0(\rho ):=\min\left\{\calC_0(|\varphi\rangle)\;\big|\;\exists m\in\mathbbm{N}_0\;\exists|\varphi\rangle\in(\bbC^{d})^{\otimes (n+m)}:  \rho={\rm tr}_{\bbC^{d^m}}[|\varphi\rangle\langle\varphi|]\right\}.
        \end{equation}

Other notions of circuit complexity could be defined by considering ancillas also for the cases of pure states or unitaries and/or by allowing for measurements. However, these can all be bounded from below by the minimal size of the respective tensor network description.

  % Given a precision $\epsilon>0$, the {\bf state complexity} and the {\bf unitary complexity} captures the minimal size of circuit to prepare the state $\ket{\varphi}$ or to implement the unitary $U$ with the gates in the gate library up to an error less than $\epsilon$, denoted by $\mathcal{C}_\epsilon(\ket{\phi})$ and $\mathcal{C}_\epsilon(U)$ \cite{aaronson2016complexity}. 

    \begin{figure}[t]
    \makebox[\textwidth]{
    \includegraphics[width=14cm]{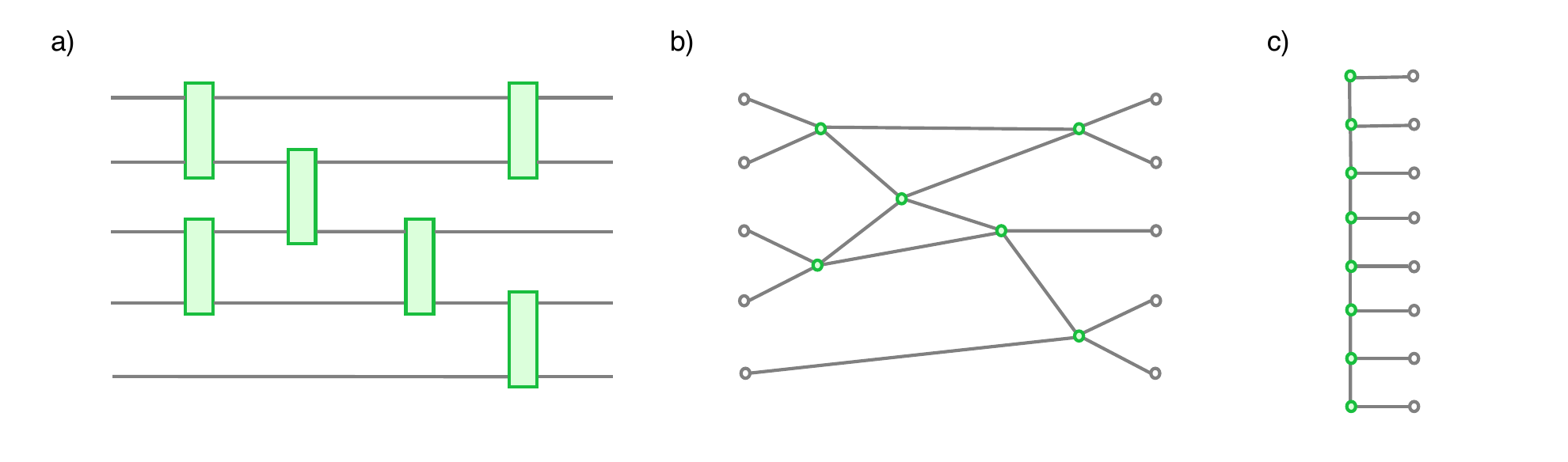} }
    \caption{a) Graphical depiction of a circuit that involves six gates and represents a unitary on $(\bbC^d)^{\otimes 5}$. b) The graph of the corresponding tensor network representation has six internal vertices of degree four. These are assigned tensors of the same degree that are contracted according to the edges. c) The graph corresponding to a \emph{matrix product state} on $(\bbC^d)^{\otimes 8}$. } \label{fig:tensors}
\end{figure}

  \subsection{Tensor network representations}\label{sec:tensors} 
We will employ a perspective on tensors that is targeted on their explicit numerical representation. That is, we will  consider tensors represented in a given product basis so that they become multi-dimensional arrays of complex numbers. For simplicity, we assume that each dimension has the same size, say $d$, so that a tensor of order $n$ becomes a map $\Psi:\bbZ_d^n\rightarrow\bbC$, i.e., for any $j=(j_1,\ldots,j_n)\in\bbZ_d^n$, which specifies a location in the array, $\Psi(j)$ is the corresponding complex number. A unitary $U$ that acts on $(\bbC^d)^{\otimes n}$, when represented in a product basis, then becomes a tensor of order $2n$ that maps $\bbZ_d^n\times\bbZ_d^n\ni (j,k)\mapsto\langle j_1,\ldots,j_n|U|k_1,\ldots,k_n\rangle$.
   
A \emph{tensor network representation} of a tensor  $\Psi:\bbZ_d^n\rightarrow\bbC$ has two ingredients: a graph and an assignment of tensors to the `internal' vertices of that graph. More precisely, the set $V$ of vertices of the graph is assumed to be a  union $V=V_i\cup V_e$ of $|V_e|=n$ `external' (or `physical') vertices that have degree one and `internal' (or `virtual') vertices in $V_i$ of higher degree. To each edge $e$ there is assigned a dimension parameter $d_e\in\bbN$, which is constrained to $d_e=d$ if the edge connects to an external vertex. In addition, each internal vertex $v\in V_i$  gets assigned a tensor $\psi_v:\bbZ_{d_{e_1}}\times\cdots\times\bbZ_{d_{e_\delta}}\rightarrow\bbC$, where the order $\delta$ equals  the degree of the vertex and the $e_i$'s are the vertex' edges. Finally, $\Psi$ is represented by the tensor network if it is obtained from the product $\prod_{v\in V_i}\psi_v$ after contracting (i.e., summing) over all indices that correspond to edges that connect internal vertices. 
   
   If $N=|V_i|$ is the number of internal vertices, $\delta$ is a uniform upper bound on their degree and $D:=\max_e\{d_e\}$ defines the `maximal bond dimension', then the number of complex numbers that specify the tensor network is at most
   \begin{equation}\label{eq:tensornetworkparameters}
       ND^\delta.
   \end{equation}
The representation of a unitary $U$ on $(\bbC^d)^{\otimes n}$ in terms of a circuit of size $N$ can be regarded as a special case of a tensor network representation with $D=d$ and $N$ internal vertices, each of degree $\delta=4$ (see Fig.\ref{fig:tensors}).
Similarly, a state vector $|\varphi\rangle\in(\bbC^d)^{\otimes n}$  generated from a product $|0\rangle^{\otimes(n+m)}$ via a unitary circuit of size $N$ and conditioned on outcomes of local measurements  on an $m$-partite ancilla has a tensor network representation with $|V_i|=N$, $D=d$ and $\delta\leq 4$. 
   
\section{Transcendence degree as lower bound for complexity}\label{sec:transcendencedegree}
For any set $S$ of complex numbers, we denote by $\bbQ(S)$ the number field generated by $S$ over $\bbQ$, i.e., the minimal field extension of the rational numbers that contains $S$. The algebraic closure of a field will be indicated by a bar, so in particular $\overline{\bbQ}$ means the field of complex algebraic numbers.

A set of numbers $\{a_1,\ldots,a_n\}\subset \bbC$  is called \emph{algebraically independent} if  the only polynomial $p\in\overline{\bbQ}[x_1,\ldots,x_n]$ with algebraic coefficients that satisfies $p(a_1,\ldots,a_n)=0$ is the zero-polynomial. This leads to the central quantity of our analysis:
\begin{definition}[Transcendence degree]\label{def:transcendence_degree}
For $S\subseteq\bbC$ we define the \emph{transcendence degree} $\gamma(S)\in\bbN\cup\{0,\infty\}$ to be the maximal number of algebraically independent elements of $S$.
\end{definition}
\emph{Remark:} We will often apply $\gamma$ to structured sets such as matrices or tensors. In that case, the additional structure is simply ignored so that only the set of entries is considered. More specifically, for $A\in\bbC^{n\times n}$ we have $\gamma(A):=\gamma\big(\{A_{i,j}\}_{i,j=1}^n\big)$ and for a map $\Psi:X\rightarrow\bbC$ we define $\gamma(\Psi):=\gamma\big(\{\Psi(x)\}_{x\in X}\big)$.\medskip

The \emph{transcendence degree} is well-known and studied in the theory of field extensions (see for instance Chap.V in \cite{morandi2012field}). By definition, the maximal number of algebraically independent elements does not change when going from $S$ to the field it generates over $\bbQ$ or its algebraic closure. Hence,
\begin{equation}\label{eq:gammafield}
    \gamma(S)=\gamma\left(\bbQ(S)\right)=\gamma\left(\overline{\bbQ(S)}\right).
\end{equation}
The following  summarizes  properties of $\gamma$ that turn out to be useful in our context:

\begin{lemma}[Properties of the transcendence degree]\label{lem:gammaprops}\mbox{}
\begin{enumerate}\vspace*{-10pt}
    \item The matrix product $AB$ of two matrices $A$ and $B$ satisfies
    \begin{equation}
        \gamma(AB)\leq\gamma(A)+\gamma(B).\label{eq:additivity}
    \end{equation}
    \item Every unitary $U\in\bbC^{d\times d}$ satisfies $\gamma(U)=\gamma(U\cup U^*)\leq d^2$. Moreover, the set of unitaries for which $\gamma(U)\neq d^2$  is a null set w.r.t. the Haar measure on $U(d)$.
    \item If $A\in\bbC^{d\times d}$ has spectrum ${\rm{spec}(A)}$, then $\gamma(A)\geq \gamma\big({\rm{spec}(A)}\big)$.
\end{enumerate}
\end{lemma}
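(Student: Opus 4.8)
The plan is to derive all three items from the standard theory of transcendence degree of field extensions, using \eqref{eq:gammafield} throughout to read $\gamma(S)$ as $\mathrm{trdeg}_{\bbQ}\bbQ(S)$. The tools I would invoke are: the tower formula $\mathrm{trdeg}_F L=\mathrm{trdeg}_F K+\mathrm{trdeg}_K L$ for $F\subseteq K\subseteq L$; monotonicity of $\mathrm{trdeg}$ under field inclusions; and the fact that adjoining elements that are algebraic over the current field---in particular, polynomial or rational expressions with algebraic coefficients in elements already present---leaves the transcendence degree unchanged. I would also note at the outset that algebraic independence over $\bbQ$ and over $\overline\bbQ$ coincide, so the choice of coefficient field in Definition~\ref{def:transcendence_degree} is harmless here.

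For item 1, I would first observe that each entry $(AB)_{ij}=\sum_k A_{ik}B_{kj}$ is an integer-coefficient polynomial in the entries of $A$ and $B$, so $\bbQ(AB)\subseteq\bbQ(A\cup B)$ and hence $\gamma(AB)\le\gamma(A\cup B)$. Then I would split $\bbQ(A\cup B)=\bbQ(A)(B)$ and apply the tower formula over $\bbQ\subseteq\bbQ(A)\subseteq\bbQ(A)(B)$, bounding $\mathrm{trdeg}_{\bbQ(A)}\bbQ(A)(B)\le\gamma(B)$ because a transcendence basis of $\bbQ(B)/\bbQ$ already generates $\bbQ(A)(B)$ over $\bbQ(A)$ up to an algebraic extension; this gives $\gamma(AB)\le\gamma(A)+\gamma(B)$. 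Item 3 is entirely analogous: the coefficients of $\det(t\1-A)$ are integer-coefficient polynomials in the entries of $A$ and thus lie in $\bbQ(A)$, so every eigenvalue of $A$ is a root of this polynomial and hence algebraic over $\bbQ(A)$; therefore $\bbQ\big(A\cup\mathrm{spec}(A)\big)$ is an algebraic extension of $\bbQ(A)$, and monotonicity plus the tower formula yield $\gamma\big(\mathrm{spec}(A)\big)\le\mathrm{trdeg}_{\bbQ}\bbQ\big(A\cup\mathrm{spec}(A)\big)=\gamma(A)$.

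For item 2 the bound $\gamma(U)\le d^2$ is immediate since $\{U_{ij}\}$ has at most $d^2$ elements, and $\gamma(U)=\gamma(U\cup U^*)$ follows from $U^*=U^{-1}$ together with Cramer's rule: each entry of $U^{-1}$ equals $\mathrm{adj}(U)_{ij}/\det U$ with $\det U\ne 0$, a rational function with integer coefficients in the entries of $U$, so $\bbQ(U\cup U^*)=\bbQ(U)$. The substantive part is the genericity statement. Here I would note that $\gamma(U)=d^2$ holds precisely when the $d^2$ entries of $U$ are algebraically independent over $\overline\bbQ$, i.e. when $U$ lies outside $Z_p:=\{V\in U(d):p\big((V_{ij})\big)=0\}$ for every nonzero $p\in\overline\bbQ[x_{11},\dots,x_{dd}]$; since there are only countably many such $p$, it suffices to show each $Z_p$ is Haar-null, and for this it is enough to prove that no nonzero polynomial with complex coefficients vanishes identically on $U(d)$---equivalently, that $U(d)$ is not contained in a proper algebraic hypersurface of $\bbC^{d^2}$---because then $Z_p$ is a proper real-analytic (indeed algebraic) subset of the $d^2$-dimensional manifold $U(d)$ and hence has measure zero. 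To establish this I would use the exponential chart $\Phi(z):=\exp\!\big(i\sum_k z_k H_k\big)$ around a real basis $H_1,\dots,H_{d^2}$ of the Hermitian matrices: $\Phi$ is holomorphic near $0\in\bbC^{d^2}$, its differential at $0$ is an isomorphism onto the space of all $d\times d$ complex matrices, so $\Phi$ is a local biholomorphism onto a neighbourhood of $\1$ in $\mathrm{GL}(d,\bbC)$, and it maps the real points $\bbR^{d^2}$ into $U(d)$. If $p$ vanished on $U(d)$, then $p\circ\Phi$ would be holomorphic near $0$ and vanish on a neighbourhood of $0$ in $\bbR^{d^2}$, hence identically (a maximal totally real submanifold is a uniqueness set for holomorphic functions), so $p$ would vanish on an open subset of $\bbC^{d^2}$ and therefore $p\equiv 0$.

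The main obstacle, and the only step that is not a routine manipulation of transcendence degrees, is exactly this last point: showing that $U(d)$ escapes every proper complex-algebraic hypersurface of $\bbC^{d^2}$, so that the countable union of the sets $Z_p$ stays Haar-null. Everything else reduces to the tower formula together with the observation that all the maps involved---matrix product, inverse of a nonsingular matrix, characteristic polynomial, and complex conjugation of a unitary via $U^*=U^{-1}$---are polynomial or rational with integer coefficients in the relevant entries.
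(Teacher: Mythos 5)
Your proof is correct throughout, and items~1 and~3 follow essentially the paper's own route (monotonicity of $\gamma$, the tower formula for transcendence degree, and the observation that the characteristic polynomial has coefficients in $\bbQ(A)$). Where you genuinely diverge is in item~2. The paper handles both claims there---$\gamma(U)=\gamma(U\cup U^*)$ and the genericity of $\gamma(U)=d^2$---through the Cayley transform: after a harmless rotation to ensure $-1\notin\mathrm{spec}(U)$, one passes to the skew-Hermitian matrix $A=C(U)$, uses that $C$ preserves $\gamma$ and that $U^*=C(-A)\in\bbQ(A)$, and then argues on the flat $d^2$-dimensional real parameter space of $A$ that any nontrivial algebraic relation among the entries forces one coordinate to take only countably many values given the others, so the exceptional set is Lebesgue-null (hence Haar-null after applying $C$). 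Your route replaces the Cayley transform in two places. For $\gamma(U)=\gamma(U\cup U^*)$ you simply use $U^*=U^{-1}$ and Cramer's rule, giving $\bbQ(U^*)=\bbQ(U)$ directly without any spectral assumption---this is in fact slightly cleaner. For genericity you stay on $U(d)$ itself and prove that no nonzero $p\in\overline\bbQ[x_{11},\dots,x_{dd}]$ vanishes identically on $U(d)$, via the exponential chart $\Phi(z)=\exp(i\sum_k z_k H_k)$ and the fact that $\bbR^{d^2}$ is a maximal totally real, hence uniqueness, set for holomorphic functions near $0$; then each $Z_p$ is a proper real-analytic subset of the manifold $U(d)$ and so is Haar-null, and there are only countably many $p$. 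Both arguments are sound; the paper's Cayley approach buys an elementary Fubini-type measure argument on a flat parameter space at the cost of the $-1\notin\mathrm{spec}(U)$ detour, while yours is coordinate-free and stays on the group at the cost of invoking a complex-analytic identity theorem.
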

\begin{proof}
    (1) Using Eq.(\ref{eq:gammafield}) together with the definition of $\gamma$, we obtain $\gamma(AB)=\gamma(\bbQ(AB))\leq\gamma(\bbQ(A\cup B))=\gamma(A\cup B)\leq\gamma(A)+\gamma(B)$.
    
    (2) Clearly, $\gamma(U)\leq d^2$ as this is maximal number of distinct entries.  Due to Eq.(\ref{eq:gammafield}) we can assume that $-1\not\in{\rm spec}(U)$ since we can always multiply $U$ with an algebraic number $\omega$ of modulus one and use that $\gamma(U)=\gamma(\omega U)$ as well as $\gamma(U\cup U^*)=\gamma(\omega U\cup\overline{\omega}U^*)$. Under this assumption, the Cayley transform $C(X):=(\1+X)(\1-X)^{-1} $, which happens to be an involution, provides a skew-hermitian matrix $A:=C(U)$ such that $U=C(A)$. Using Gauss elimination to compute the inverse and again Eq.(\ref{eq:gammafield}), we see that the Cayley transform does not change the transcendence degree. So $\gamma(U)=\gamma(A)$. 
    Moreover, $U^*=C(-A)$ implies that each matrix element of $U^*$ is contained in $\bbQ(A)$ as well. Hence, $\gamma(U\cup U^*)=\gamma(A)=\gamma(U)$.
    
    In order to see that the bound $\gamma(U)\leq d^2$ generically holds with equality it suffices to show this for $\gamma(A)$ since the Cayley transform maps Lebesgue null sets into Haar null sets and vice versa. $A$, however, can be parametrized by $d^2$ real, unconstrained parameters. If these are not all algebraically independent, then $d^2-1$ of them determine the last one up to a countable number of alternatives. Due to $\sigma$-additivity of the Lebesgue measure, $\gamma(A)\neq d^2$ is thus a null set. Hence, $\gamma(U)\neq d^2$ also characterizes a null set w.r.t. the Haar measure. 
    
    (3) $\lambda\in{\rm spec}(A)$ means that $p(\lambda)=0$ holds for the characteristic polynomial $p(\lambda):=\det(\lambda\1-A)$. The fact that $p$'s coefficients are in $\bbQ(A)$ then implies that ${\rm spec}(A)\subset\overline{\bbQ(A)}$ and therefore, by Eq.(\ref{eq:gammafield}), $ \gamma\big({\rm{spec}(A)}\big)\leq\gamma\big(\overline{\bbQ(A)}\big)=\gamma(A)$.
\end{proof}
With this Lemma at hand we can now prove lower bounds on the circuit complexity in terms of the transcendence degree:
\begin{theorem}[Transcendence degree vs. circuit complexity]\label{thm:complexitybounds}
    Let $U$ be a unitary matrix, $|\varphi\rangle$ a state vector and $\rho$ a density matrix, all three  represented in a product basis of $(\bbC^d)^{\otimes n}$. Then: 
    \begin{eqnarray}
    \gamma\big({\rm spec}(U)\big)\ \leq\ \gamma(U) &\leq& d^4\;\calC_0(U),\label{eq:circuittransc1}\\
    \gamma\big({\rm spec}(\rho)\big)\ \leq\ \gamma(\rho) &\leq& d^4\;\calC_0(\rho),\label{eq:circuittranscrho}\\
    \gamma\big(|\varphi\rangle\big) &\leq& d^4\;\calC_0\big(|\varphi\rangle\big).\label{eq:circuittransc2}
    \end{eqnarray}
\end{theorem}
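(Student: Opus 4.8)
The two leftmost inequalities, $\gamma(\mathrm{spec}(U))\le\gamma(U)$ and $\gamma(\mathrm{spec}(\rho))\le\gamma(\rho)$, are nothing but part~(3) of Lemma~\ref{lem:gammaprops}, so all the work lies in the three upper bounds in terms of $\calC_0$. I would prove them by one and the same two-step recipe: first bound the transcendence degree that a single gate can contribute, then glue the gates together with the sub-additivity~(\ref{eq:additivity}).

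For the first step, write $\calC_0(U)=g$, so that $U=U_1\cdots U_g$ with each $U_i$, up to a permutation of tensor factors, of the form $V_i\otimes\1_d^{\otimes(n-2)}$ with $V_i\in U(d^2)$. Permuting tensor factors only permutes rows and columns and hence leaves $\gamma$ unchanged; and every entry of $V_i\otimes\1_d^{\otimes(n-2)}$ is either $0$ or an entry of $V_i$, so (since $0\in\bbQ$) $\gamma(U_i)=\gamma(V_i)$. Viewing $V_i$ as a unitary in $\bbC^{d^2\times d^2}$ and applying part~(2) of Lemma~\ref{lem:gammaprops} gives $\gamma(U_i)=\gamma(V_i)\le(d^2)^2=d^4$, a bound that is crucially independent of $n$. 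Iterating Eq.(\ref{eq:additivity}) over the product then yields $\gamma(U)\le\sum_{i=1}^g\gamma(U_i)\le d^4 g=d^4\,\calC_0(U)$, which is Eq.(\ref{eq:circuittransc1}).

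For the state version I would pick an $U$ with $|\varphi\rangle=U|0\rangle^{\otimes n}$ and $\calC_0(U)=\calC_0(|\varphi\rangle)$; in the product basis $|0\rangle^{\otimes n}$ is a standard basis vector, so $|\varphi\rangle$ is simply a column of $U$, its entry set is contained in that of $U$, and $\gamma(|\varphi\rangle)\le\gamma(U)\le d^4\,\calC_0(|\varphi\rangle)$, giving Eq.(\ref{eq:circuittransc2}). For the density-matrix version I would take an optimal $U$ (acting on $(\bbC^d)^{\otimes(n+m)}$ if the definition of $\calC_0(\rho)$ allows an $m$-qudit ancilla) with $\rho=\mathrm{Tr}_{\mathrm{anc}}\big(U|0\rangle\langle 0|^{\otimes(n+m)}U^\dagger\big)$. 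Each entry $\rho_{jk}=\sum_\ell U_{(j\ell),0}\,\overline{U_{(k\ell),0}}$ then lies in $\bbQ(U\cup U^*)$, and part~(2) of Lemma~\ref{lem:gammaprops}, which gives $\gamma(U\cup U^*)=\gamma(U)$, together with Eq.(\ref{eq:gammafield}) yields $\gamma(\rho)\le\gamma\big(\bbQ(U\cup U^*)\big)=\gamma(U)\le d^4\,\calC_0(U)=d^4\,\calC_0(\rho)$, i.e.\ Eq.(\ref{eq:circuittranscrho}).

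I do not expect a genuine obstacle — the theorem is essentially a bookkeeping consequence of Lemma~\ref{lem:gammaprops}. The two points that need to be handled with care are: (i) in the first step, checking that embedding a two-qudit gate into the $n$-qudit (or $(n+m)$-qudit) space, and permuting tensor factors, cannot push its transcendence degree above $d^4$ — the $n$-independence of this per-gate cost is exactly what makes the bound meaningful, since generic unitaries have $\gamma(U)=d^{2n}$; and (ii) in the mixed-state case, the reliance on $\gamma(U\cup U^*)=\gamma(U)$, which is what lets $\rho=\mathrm{Tr}(U\,\cdot\,U^\dagger)$ be controlled by $\gamma(U)$ alone. If the ambient definitions also permit post-selected measurements, the argument still goes through: a post-selected output state is a normalized sub-vector of a column of $U$, with normalization $\sqrt{p}$ where $p=\sum_j|U_{(j\ell),0}|^2\in\bbQ(U\cup U^*)$, hence $\sqrt{p}\in\overline{\bbQ(U)}$ and $\gamma$ is still bounded by $\gamma(U)$.
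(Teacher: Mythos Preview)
Your proposal is correct and follows essentially the same route as the paper: the spectral inequalities come from part~(3) of Lemma~\ref{lem:gammaprops}, the per-gate bound $\gamma(U_i)=\gamma(V_i)\le d^4$ from part~(2), and the product is handled by iterating Eq.(\ref{eq:additivity}). The only cosmetic differences are that for $|\varphi\rangle$ the paper uses $\gamma(|\varphi\rangle)\le\gamma(U)+\gamma(|0\rangle^{\otimes n})$ with the second term vanishing (rather than your ``column of $U$'' observation), and for $\rho$ it writes $\rho=U\rho_0 U^*$ with $\gamma(\rho_0)=0$ and then invokes $\gamma(U\cup U^*)=\gamma(U)$, whereas you argue directly that each entry of $\rho$ lies in $\bbQ(U\cup U^*)$; your version has the mild advantage of transparently covering the ancilla/post-selection scenarios you mention.
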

\begin{proof}
    The leftmost inequalities in Eqs.(\ref{eq:circuittransc1},\ref{eq:circuittranscrho}) are from {\it{3.}} in Lemma \ref{lem:gammaprops}. For the second inequality in Eq.(\ref{eq:circuittransc1}) suppose $U=\prod_{k=1}^{\calC_0(U)} U_k$ is a minimal circuit for $U$ where each $U_k$ is (up to permutation of tensor factors) of the form $U_k=\1\otimes V_k$ with $V_k\in U(d^2)$. Using that $\gamma(\1\otimes V_k)=\gamma(V_k)$ and applying {\it{1.}} and {\it{2.}} from Lemma \ref{lem:gammaprops} we obtain
    $$\gamma(U)\leq \sum_{k=1}^{\calC_0(U)} \gamma(U_k) \leq d^4\;{\calC_0(U)}. $$
    This in turn implies Eq.(\ref{eq:circuittransc2}) for $|\varphi\rangle=U|0\rangle^{\otimes n}$ via $\gamma\big(|\varphi\rangle\big)\leq \gamma(U)+\gamma\big(|0\rangle^{\otimes n}\big)$ as $\gamma\big(|0\rangle^{\otimes n}\big)=0$. 
    
    For the second inequality in Eq.(\ref{eq:circuittranscrho}), suppose $\rho$ is the partial trace of $|\varphi\rangle=U|0\rangle^{\otimes{(n+m)}}$ with $\calC_0(\rho)=\calC_0(|\varphi\rangle)=\calC_0(U)$. Then,  using that $\gamma(U\cup U^*)=\gamma(U)$ as shown in {\it{2.}} in Lemma \ref{lem:gammaprops} and exploiting the right inequality of Eq.(\ref{eq:circuittransc1}), we obtain:
    \begin{equation*}
        \gamma(\rho)\leq\gamma(|\varphi\rangle\langle\varphi|)\leq\gamma(\varphi \cup\bar{\varphi})\leq \gamma(U\cup U^*)=\gamma(U)\leq d^4\;\calC_0(\rho).
    \end{equation*}
\end{proof}
A similar result holds if we employ the more general perspective of tensors and tensor network representations discussed in Sec.\ref{sec:tensors}. To put it simply, the number of parameters of the network is an upper bound on the transcendence degree of the tensor it represents:
\begin{theorem}[Transcendence degree of tensor networks]\label{thm:tensornetworkbound}
Let $\Psi:\bbZ_d^n\rightarrow\bbC$ be a tensor that admits a tensor network representation with maximal bond dimension $D$ using $N$ internal vertices of degree at most $\delta$. Then
\begin{equation}
    \gamma(\Psi)\leq  N D^\delta.
\end{equation}
\end{theorem}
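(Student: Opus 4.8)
The plan is to mimic --- and slightly strengthen --- the argument behind Theorem~\ref{thm:complexitybounds}, replacing the repeated appeal to sub-additivity (Lemma~\ref{lem:gammaprops}(1)) by the single observation that forming polynomial expressions in a set of numbers cannot raise its transcendence degree. Let $S\subseteq\bbC$ denote the set of all entries occurring in the internal tensors of the given representation, i.e.\ $S=\bigcup_{v\in V_i}\{\psi_v(x)\}_{x}$. Since each internal vertex $v$ has at most $\delta$ incident edges and each corresponding index ranges over at most $D$ values, $\psi_v$ contributes at most $D^\delta$ numbers, so $|S|\le ND^\delta$ exactly as in Eq.~(\ref{eq:tensornetworkparameters}).

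First I would make precise how $\Psi$ depends on $S$. By the definition of tensor network contraction in Section~\ref{sec:tensors}, for each $j\in\bbZ_d^n$ the value $\Psi(j)$ is the sum, over all labelings of the internal edges, of the product $\prod_{v\in V_i}\psi_v(\cdot)$ of exactly one entry from each internal tensor. No divisions occur, so $\Psi(j)$ is a polynomial in the elements of $S$ with coefficients in $\bbN$ (every monomial has coefficient one before collecting terms); in particular $\Psi(j)\in\bbQ(S)$ for every $j$, and hence $\{\Psi(j)\}_{j\in\bbZ_d^n}\subseteq\bbQ(S)$.

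The theorem then follows from three elementary properties of $\gamma$: monotonicity under inclusion ($A\subseteq B\Rightarrow\gamma(A)\le\gamma(B)$, immediate from Definition~\ref{def:transcendence_degree}), invariance under field generation (Eq.~(\ref{eq:gammafield})), and the trivial bound $\gamma(S)\le|S|$. Chaining these,
\begin{equation*}
\gamma(\Psi)\ =\ \gamma\big(\{\Psi(j)\}_{j\in\bbZ_d^n}\big)\ \le\ \gamma\big(\bbQ(S)\big)\ =\ \gamma(S)\ \le\ |S|\ \le\ ND^\delta ,
\end{equation*}
which is the claim.

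I do not anticipate a genuine obstacle: this is in essence a bookkeeping statement whose ingredients are all already in place. The only points deserving a line of care are (i) confirming that the contraction expresses each $\Psi(j)$ as an honest polynomial --- so that one stays inside $\bbQ(S)$ and no field-of-fractions subtleties arise --- and (ii) applying the count $|S|\le ND^\delta$ with the correct interpretation of $\delta$ as an upper bound on all vertex degrees and $D$ as an upper bound on all bond dimensions (including the ones pinned to $d$ at the external legs). As a side remark, since a circuit of size $N$ is a tensor network with $D=d$, $\delta=4$, this theorem reproduces the upper bounds in Eqs.~(\ref{eq:circuittransc1})--(\ref{eq:circuittransc2}) (with $\calC_0$ in place of $N$), so the argument above also furnishes a more uniform route to Theorem~\ref{thm:complexitybounds}.
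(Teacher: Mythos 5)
Your argument is correct and is essentially the same as the paper's: identify the at most $ND^\delta$ entries of the internal tensors as a set $S$, note $\Psi(j)\in\bbQ(S)$, and conclude via $\gamma(\Psi)\le\gamma(\bbQ(S))=\gamma(S)\le ND^\delta$. The only cosmetic difference is that you spell out that contraction yields honest polynomials (no divisions), whereas the paper just says $\Psi$ is ``built up by algebraic operations from $S$''; both suffice since $\bbQ(S)$ is a field and Eq.~(\ref{eq:gammafield}) applies either way.
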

\begin{proof}
As discussed in Sec.\ref{sec:tensors} and Eq.(\ref{eq:tensornetworkparameters}) the set $S:=\{\psi_v(j)\}_{v,j}$ of complex numbers that specify the considered tensor network contains at most $ND^\delta$ elements. As $\Psi$ is built up by algebraic operations from $S$, we have that $\gamma(\Psi)\leq\gamma(\bbQ(S))=\gamma(S)\leq ND^\delta$. 
\end{proof}

Before moving to explicit examples, we will have a brief look at the transcendence degree of density matrices that are represented as \emph{Gibbs states} of algebraic Hamiltonians. In this case, the transcendence degree of the density matrix is essentially equal to the number of $\bbQ$-linearly independent eigenvalues  of the  Hamiltonian:
\begin{prop}[Transcendence degree of Gibbs states]
 If $H\in\overline{\bbQ}^{d\times d}$ is Hermitian, and its spectrum spans a $\bbQ$-vector space of dimension $\dim_\bbQ\big({\rm spec}(H)\big)$, then $\rho:=e^H/\tr[e^H]$ satisfies
    \begin{equation}
        \gamma\left(\rho\right)\leq \dim_\bbQ\big({\rm spec}(H)\big)\leq \gamma\left(\rho\right)+1.
    \end{equation}
\end{prop}
\begin{proof}
    With $E:=\{e^\lambda|\lambda\in{\rm spec}(H)\}$ and $c:=\tr[e^H]^{-1}\in\bbQ(E)$ we  use the spectral decomposition  $\rho=c\;UDU^*$, where the diagonal matrix $D$ has entries in $E$ and the unitaries $U,U^*$ have entries in $\overline{\bbQ(H)}=\overline{\bbQ}$. So $\gamma(\rho)=\gamma(cD)$. Moreover, Cor.\ref{cor:E} implies that $\gamma(D)=\dim_\bbQ\big({\rm spec}(H)\big)$. From here the claim follows since
    \begin{eqnarray}
        \gamma(cD)&\leq&\gamma\big(\bbQ(E)\big)=\gamma(D),\label{eq:Gibbs1}\\
        \gamma(D)&\leq&\gamma(c^{-1})+\gamma(cD)=1+\gamma(cD)\label{eq:Gibbs2}.
    \end{eqnarray}
    Here, Eq.(\ref{eq:Gibbs1}) has used that $c\in\bbQ(E)$ and Eq.(\ref{eq:Gibbs2}) is based on writing $D=c^{-1}cD$ and using the sub-additivity of the transcendence degree specified in Eq.(\ref{eq:additivity}).
\end{proof}
Combining this with Theorem \ref{thm:complexitybounds} we get that the dimension of the spectrum of an algebraic Hamiltonian provides a lower bound on the exact circuit complexity of its Gibbs state.

\section{Explicit examples and their complexities}\label{sec:examples}
Since we have bounded the circuit complexity below by the transcendence degree, obtaining explicit examples of large circuit complexity is reduced to finding explicit sets of large transcendence degree. According to \emph{Schanuel's conjecture} (Chap.21 in \cite{MurtyRath}), every set of complex numbers $\alpha_1,\ldots,\alpha_n$ that are linearly independent over $\bbQ$ should lead to 
\begin{equation}
    \gamma\big(\{\alpha_1,\ldots,\alpha_n,e^{\alpha_1},\ldots,e^{\alpha_n}\}\big)\geq n.
\end{equation}
In fact, most of the known transcendence results would follow from here. We will, however, not resort to this conjecture. The explicit examples provided in the following theorem rather follow from   known results that can be viewed as proven special cases or implications of Schanuel's conjecture.
\begin{theorem}[Explicit sets of large transcendence degree]\label{thm:largesets} \mbox{}%\vspace{-\topskip-5pt} 
Consider any $n,d\in\bbN$ with $d\geq 2$.
\begin{enumerate}
    \item  Let $p_1, \dots,p_n$  be  distinct prime numbers and $t\in\overline{\bbQ}\setminus\{0\}$. Then
	\begin{equation}
	    \gamma\left(\big\{e^{it\phi(j)}\big\}_{j\in\bbZ_d^n}\right)=d^n\quad \text{for}\quad 	\phi(j):=\left(\prod\limits_{k=1}^{n}p_k^{j_k}\right)^{\frac{1}{d}}.\label{eq:mainexampleset}
	\end{equation}
    \item If $\alpha\neq 0,1$ is algebraic and $m>1$ is a square-free integer, then:
    \begin{equation}
        \gamma\left(\Big\{\alpha^{m^{\frac{k}{d}}}\Big\}_{k=1}^{d-1}\right)\geq\frac{d}2.\label{eq:mainexampleset2}
    \end{equation}
\end{enumerate}
\end{theorem}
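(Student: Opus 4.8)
\medskip

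The plan is to deduce both statements from deep transcendence results in the literature rather than proving them from scratch, since each rests on a hard theorem whose proof is well beyond the scope of this paper.

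For part (1), I would first reduce the claim about $\big\{e^{it\phi(j)}\big\}_{j\in\bbZ_d^n}$ to a statement about the algebraic independence of the real numbers $\phi(j)=\big(\prod_k p_k^{j_k}\big)^{1/d}$ themselves, or more precisely about the $\bbQ$-linear independence of their logarithms together with an application of the Lindemann--Weierstrass theorem in the form: if $\beta_1,\ldots,\beta_N$ are algebraic numbers that are $\bbQ$-linearly independent, then $e^{\beta_1},\ldots,e^{\beta_N}$ are algebraically independent. Here each $it\phi(j)$ is algebraic (as $t\in\overline{\bbQ}$ and $\phi(j)$ is a radical of a rational, hence algebraic), so the whole problem collapses to showing that the $d^n$ numbers $\phi(j)$, $j\in\bbZ_d^n$, are linearly independent over $\bbQ$. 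This in turn is a classical fact about radicals: the numbers $\big(p_1^{j_1}\cdots p_n^{j_n}\big)^{1/d}$ for $0\le j_k\le d-1$ form part of a $\bbQ$-basis of $\bbQ\big(p_1^{1/d},\ldots,p_n^{1/d}\big)$, which is a Kummer-type extension of degree $d^n$ (using that the $p_k$ are distinct primes, so multiplicatively independent modulo $d$-th powers). I would cite the standard result on linear independence of radicals of multiplicatively independent rationals for this step. The upper bound $\gamma(\cdot)\le d^n$ is trivial since there are only $d^n$ numbers in the set, so equality follows.

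For part (2), the key input is a theorem on algebraic independence of values of $\alpha^{\beta}$ — specifically a quantitative version of the Gelfond--Schneider circle of ideas, such as the Gelfond--Smith / Diaz theorem: if $\alpha$ is algebraic, $\alpha\ne 0,1$, and $\beta$ is algebraic of degree $d$ over $\bbQ$, then among the $d-1$ numbers $\alpha^{\beta},\alpha^{\beta^2},\ldots,\alpha^{\beta^{d-1}}$ at least $\lceil (d-1)/2\rceil$ (or at least $d/2$) are algebraically independent. I would apply this with $\beta=m^{1/d}$, which is algebraic of degree exactly $d$ over $\bbQ$ because $m>1$ is square-free (so $x^d-m$ is irreducible by Eisenstein at any prime dividing $m$); then $\alpha^{\beta^k}=\alpha^{m^{k/d}}$ for $k=1,\ldots,d-1$, and the cited theorem gives $\gamma\ge d/2$ directly.

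The main obstacle is not really mathematical in the sense of requiring new ideas from us — it is locating and citing the right transcendence theorems in the exactly usable form, and handling the two small algebraic bookkeeping points carefully: (i) that distinct primes $p_1,\ldots,p_n$ are multiplicatively independent modulo $d$-th powers so that the $d^n$ radicals are genuinely $\bbQ$-linearly independent (not merely that the field has degree $d^n$), and (ii) that $m^{1/d}$ has degree exactly $d$ over $\bbQ$ when $m$ is square-free, so that the full list $\beta,\ldots,\beta^{d-1}$ is available as input to the Gelfond--Smith-type bound. Everything else — passing from logarithms to exponentials via Lindemann--Weierstrass, and the trivial cardinality upper bound — is routine once those two points are in place.
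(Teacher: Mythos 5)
Your proposal follows essentially the same route as the paper: reduce part (1) to $\bbQ$-linear independence of the radicals $\phi(j)$ and then lift via Lindemann--Weierstrass, and for part (2) invoke the Diaz--Philippon bound with $\beta = m^{1/d}$ of degree $d$. The only divergence is in how you establish the two auxiliary algebraic facts: the paper derives both the linear independence of the $d^n$ radicals and the degree of $m^{1/d}$ from Besicovitch's theorem, whereas you appeal to the tower structure of $\bbQ(p_1^{1/d},\ldots,p_n^{1/d})$ for the former and to Eisenstein's criterion (valid at any prime dividing the square-free $m$) for the latter --- the Eisenstein route is in fact a slightly more direct way to see that $x^d - m$ is irreducible.
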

\begin{proof}
    (1) As shown in Cor.\ref{cor:Besicovitch} in Appendix \ref{app:tnt}, Besicovitch's theorem implies that $\{\phi(j)\}_j$ is a set of $d^n$ algebraic numbers that are linearly independent over $\bbQ$. Clearly, this remains true when multiplied by $it$. By the Lindemann-Weierstrass theorem (Thm.\ref{thm:LW} in Appendix \ref{app:tnt}), the exponential function then lifts $\bbQ$-linear independence to algebraic independence.
    
    (2) is implied by the Diaz-Philippon theorem (Thm.\ref{thm:DiazPhilippon} in Appendix \ref{app:tnt}) when using that $m^{1/d}$ is an algebraic number of degree $d$. The latter can for instance be seen as a consequence of Besicovitch's theorem: if $m=p_1\cdots p_s$ is the prime factorization of $m$ and we assume that there would be a polynomial $p\in\bbQ[x]$ of degree less than $d$ that has $m^{1/d}$ as a root, then $P(p_1^{1/d},\ldots,p_s^{1/d}):=p\big((p_1\cdots p_s)^{1/d}\big)=0$ would contradict  Besicovitch's theorem.
\end{proof}

Combining the sets of large transcendence degree of Thm.\ref{thm:largesets} with the complexity bounds of Thm.\ref{thm:complexitybounds} and Thm.\ref{thm:tensornetworkbound} then finally leads to explicit examples of exponential complexity. The following corollary makes this explicit for the first construction of Thm.\ref{thm:largesets}:
\begin{corollar}[Explicit examples of exponential circuit complexity]\label{cor:exexcc}
Let $\{|j\rangle\}_{j\in\bbZ_d^n}$ be an orthonormal product basis of $(\bbC^d)^{\otimes n}$, $d\geq 2$,  $t\in(\overline{\bbQ}\cap\bbR)\setminus\{0\}$ and $\phi(j)$ as in Eq.(\ref{eq:mainexampleset}). 
\begin{enumerate}
    \item For $H:=\sum_{j} \phi(j)|j\rangle\langle j|$ the unitary $U:=e^{itH}$ satisfies
    \begin{equation}
        \calC_0(U) \geq d^{n-4}.\label{eq:C0Ulower}
    \end{equation}
    \item The state vector $|\varphi\rangle:=d^{-n/2}\sum_j \exp[it\phi(j)]\; |j\rangle$ satisfies
    \begin{equation}
        \calC_0(|\varphi\rangle) \geq d^{n-4}.
    \end{equation}
    \end{enumerate}
    Moreover, both $U$ and $|\varphi\rangle$ do not admit a tensor network representation (as specified in Sec.\ref{sec:tensors}) with less than $d^n$ parameters.
\end{corollar}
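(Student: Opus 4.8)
The plan is to combine the transcendence-degree computation of Theorem \ref{thm:largesets}(1) with the lower bounds of Theorem \ref{thm:complexitybounds} and Theorem \ref{thm:tensornetworkbound}. First, for part (1): the unitary $U=e^{itH}$ with $H=\sum_j\phi(j)|j\rangle\langle j|$ is diagonal in the product basis $\{|j\rangle\}$, with diagonal entries $e^{it\phi(j)}$ and off-diagonal entries $0$. Hence its set of matrix entries is exactly $\{0\}\cup\{e^{it\phi(j)}\}_{j\in\bbZ_d^n}$, and since $0\in\overline{\bbQ}$ contributes nothing, $\gamma(U)=\gamma\big(\{e^{it\phi(j)}\}_j\big)=d^n$ by Theorem \ref{thm:largesets}(1) (note $t\in(\overline{\bbQ}\cap\bbR)\setminus\{0\}\subset\overline{\bbQ}\setminus\{0\}$, so the hypothesis applies). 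Theorem \ref{thm:complexitybounds}, Eq.(\ref{eq:circuittransc1}), then gives $d^n=\gamma(U)\leq d^4\,\calC_0(U)$, i.e. $\calC_0(U)\geq d^{n-4}$, which is Eq.(\ref{eq:C0Ulower}). Alternatively one can route through $\gamma({\rm spec}(U))$: the spectrum of $U$ is precisely $\{e^{it\phi(j)}\}_j$, so $\gamma({\rm spec}(U))=d^n$ and the leftmost inequality of Eq.(\ref{eq:circuittransc1}) gives the same bound.

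For part (2): the state $|\varphi\rangle=d^{-n/2}\sum_j e^{it\phi(j)}|j\rangle$ has amplitudes $d^{-n/2}e^{it\phi(j)}$ in the product basis. Since $d^{-n/2}$ is algebraic, multiplying by it does not change the transcendence degree (by Eq.(\ref{eq:gammafield}), as $d^{-n/2}\in\overline{\bbQ}$), so $\gamma(|\varphi\rangle)=\gamma\big(\{e^{it\phi(j)}\}_j\big)=d^n$. Theorem \ref{thm:complexitybounds}, Eq.(\ref{eq:circuittransc2}), then yields $d^n\leq d^4\,\calC_0(|\varphi\rangle)$, hence $\calC_0(|\varphi\rangle)\geq d^{n-4}$.

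Finally, the tensor-network statement: both $U$ (viewed as a tensor $\bbZ_d^n\times\bbZ_d^n\to\bbC$ of order $2n$) and $|\varphi\rangle$ (viewed as a tensor $\bbZ_d^n\to\bbC$ of order $n$) have transcendence degree $d^n$ as computed above. Suppose either admitted a tensor network representation (as in Sec.\ref{sec:tensors}) with bond dimension $D$, $N$ internal vertices of degree at most $\delta$, and $ND^\delta<d^n$. Then Theorem \ref{thm:tensornetworkbound} would force $\gamma\leq ND^\delta<d^n$, contradicting $\gamma=d^n$. Hence any such representation has at least $d^n$ parameters.

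I expect no serious obstacle here — the corollary is essentially an assembly of already-proven ingredients. The only points requiring a word of care are: (i) checking that padding with zeros and scaling by the algebraic normalization constant $d^{-n/2}$ genuinely leaves $\gamma$ unchanged (both follow from Eq.(\ref{eq:gammafield}) and the remark after Definition \ref{def:transcendence_degree}, since adjoining algebraic numbers does not change the transcendence degree); and (ii) making sure the hypothesis $t\in\overline{\bbQ}\setminus\{0\}$ of Theorem \ref{thm:largesets}(1) is met, which it is since we restrict to real algebraic $t\neq 0$ precisely so that $H$ is Hermitian and $U$ unitary. The genuine mathematical content — that $\{e^{it\phi(j)}\}_j$ has full transcendence degree $d^n$ — lives entirely in Theorem \ref{thm:largesets} and its appendix inputs (Besicovitch and Lindemann–Weierstrass), not in this corollary.
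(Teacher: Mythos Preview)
Your proposal is correct and matches the paper's approach exactly: the paper does not even spell out a proof of this corollary, presenting it as an immediate combination of Theorem~\ref{thm:largesets}(1) with the bounds of Theorems~\ref{thm:complexitybounds} and~\ref{thm:tensornetworkbound}, which is precisely the assembly you carry out. Your side remarks on the harmless effect of the algebraic scalar $d^{-n/2}$ and the zero entries, and on the spectral route via $\gamma({\rm spec}(U))$, are all accurate and in line with the paper's Remark following the corollary.
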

\emph{Remark:} Using the spectral inequality of Thm.\ref{thm:complexitybounds} we get that Eq.(\ref{eq:C0Ulower}) does not only hold for $U$ but for every unitary that has the same spectrum as $U$.
In other words, the result is stable with respect to perturbations of the eigenbasis. However, this stability as well as the related simplicity of the examples comes at a price: the complexity is, although exponential in $n$, not maximal, as the maximum complexity grows as $d^{2n}$ rather than as $d^n$. The following  shows that examples of maximal complexity $\Omega(d^{2n})$ can be constructed using the same toolbox.\vspace*{5pt}

\emph{Modified example:} For $i,j\in \bbZ_d^n$ we define $$\phi(i,j):=\left(\prod\limits_{k=1}^{n}p_k^{i_kd+j_k}\right)^{\frac{1}{d^2}},$$
 where $p_1,\ldots, p_n$ are again distinct primes. According to Thm.\ref{thm:largesets}, the matrix $A$ with coefficients $A_{ij}:=e^{\phi(i,j)}$ then has transcendence degree $d^{2n}$. The Hermitian matrix  $H_A:=(A+A^*)+i(A-A^*)$ then satisfies $\gamma(H_A)=\gamma(A)$ and via the Cayley transform of $H_A$ we obtain a unitary matrix $U_A\in U(d^{2n})$ defined by $U_A:=(H_A-i\1)(H_A+i\1)^{-1}$. As the Cayley transformation does not change the transcendence degree either (see the proof of Lemma \ref{lem:gammaprops}), $\gamma (U_A)=d^{2n}$, and it follows immediately from Theorem \ref{thm:complexitybounds} that $\calC_0(U_A)\geq d^{2n-4}$. Hence, these examples have maximal complexity up to a constant factor. However, since their construction is more cumbersome and the coefficients are not as directly given as in the examples of Cor. \ref{cor:exexcc}, we will work with the latter in the following. \vspace*{5pt}

Clearly, one can construct other examples by either using Eq.(\ref{eq:mainexampleset}) in a different way or by resorting to Eq.(\ref{eq:mainexampleset2}). Instead of doing this, however, we broaden the focus and look into approximability of the  examples provided in Cor.\ref{cor:exexcc}.

\section{Hardness of approximation}
So far, we have considered the exact (i.e., $\epsilon=0$) circuit complexity, albeit with respect to a continuous gate set. As mentioned in Sec.\ref{sec:prelim1}, for every unitary $U\in U(d^n)$ there is an $\epsilon>0$ so that $\calC_\epsilon(U)=\calC_0(U)$. Replacing this purely topological statement by an explicit quantitative bound, however, appears to be more than difficult. In order to be able to make any quantitative statement for \emph{a given} $\epsilon>0$ we will consider sets of examples of large $\calC_0$-complexity rather than individual instances. In a nutshell, we will first quantify the statement that `most diagonal unitaries and most maximally coherent states are hard to approximate' and then show, by a uniform density argument, that the same is true for each set of  examples. 

\subsection{Diagonal unitaries}\label{sec:approximatediagonal} In this section we employ a standard counting argument \cite{knill1995approximation,nielsen2002quantum} to quantify hardness of approximation for generic diagonal unitaries. In order to get a countable set, we consider a finite gate library $\calG\subset U(d^2)$. By Lemma \ref{lem:SK}, however, the derived results can easily be transferred to the case of any other gate set, including infinite ones.

The following theorem makes the statement `almost all diagonal unitaries are hard to approximate' precise, by showing that the set that can be approximated by circuits of sub-exponential size $O(d^n/\ln n)$ is double-exponentially small. 

\begin{theorem}[Hardness of approximating  diagonal unitaries]\label{thm:harddiagonals}
    For $\epsilon\in[0,1)$ and a  universal gate set $\calG\subset U(d^2)$ with $k\in\bbN$ elements, let  $\mathcal{U}_g$ be the set of unitaries on $(\bbC^{d})^{\otimes n}$ whose complexity is bounded by $g$ so that 
    \begin{equation}
        \mathcal{U}_g:=\big\{U\;|\;\calC_\epsilon(U,\calG)\leq g\big\}\quad\text{with }\quad g:=\left(\frac{\ln\frac{\pi}{2}}{2 k}\right)\frac{d^n}{\ln n}\ .
    \end{equation} 
    \begin{enumerate}
        \item If $\calD$ is the diagonal subgroup of $U(d^n)$ and $\mu$ its normalized Haar measure, then
        \begin{equation}\label{eq:arcsinbound}
            \mu\big(\calD\cap\calU_g\big)\leq\big(\arcsin(\epsilon/2)\big)^{d^n}.
        \end{equation}
        \item Let $\calD_\pm$ be the  subgroup of $U(d^n)$ that contains all diagonal unitaries with diagonal entries $\pm 1$. The fraction of $\calD_\pm$ that is contained in $\calU_g$ is at most 
        \begin{equation}
            \left(\frac{\pi}{4}\right)^{d^n}.
        \end{equation}
    \end{enumerate}
\end{theorem}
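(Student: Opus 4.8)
The plan is a standard counting (union--bound) argument: enumerate the few unitaries reachable by short circuits, and show that the diagonal unitaries they can $\epsilon$-approximate form a negligible fraction. \textbf{Step 1 (counting circuits).} Let $N$ be the number of unitaries on $(\bbC^d)^{\otimes n}$ arising as products of at most $g$ gates from $\calG$. Each of the $\le g$ layers is fixed by a choice of gate ($k$ possibilities) and a choice of a pair of qudits it acts on ($\le n^2$ possibilities), so $N\le(kn^2)^g$. The particular value $g=\bigl(\ln\tfrac\pi2\bigr)d^n/(2k\ln n)$ is calibrated precisely so that $\log N\le g\log(kn^2)\le d^n\ln\tfrac\pi2$ for all $k\ge1$, $n\ge2$, i.e.
\[
N\ \le\ (\pi/2)^{d^n};
\]
the factor $1/\ln n$ absorbs the $\ln(n^2)$ from the qudit placements and $1/(2k)$ absorbs $\ln k$, the bound being essentially tight when $k=1$. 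Write $V_1,\dots,V_N$ for the resulting circuit unitaries, so that $\calU_g=\bigcup_{i=1}^N\{U:\|U-V_i\|_\infty\le\epsilon\}$.

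\textbf{Step 2 (part 1: the Haar bound).} By the union bound,
\[
\mu(\calD\cap\calU_g)\ \le\ \sum_{i=1}^N\mu\bigl(\{D\in\calD:\|D-V_i\|_\infty\le\epsilon\}\bigr).
\]
Identify $\calD\cong U(1)^{d^n}$ with $\mu$ the product of the normalized arc-length measures. The constraint $\|D-V_i\|_\infty\le\epsilon$ forces $|D_{jj}-(V_i)_{jj}|\le\epsilon$ for every $j$, and only those $V_i$ whose diagonal entries lie within $\epsilon$ of the unit circle can be relevant (otherwise no diagonal unitary is $\epsilon$-close to $V_i$ at all). For a reference point on the unit circle the set of $e^{i\theta}$ within chordal distance $\epsilon$ is an arc of normalized length $\tfrac2\pi\arcsin(\epsilon/2)$, so each summand is at most $\bigl(\tfrac2\pi\arcsin(\epsilon/2)\bigr)^{d^n}$, and with Step~1 this gives $\mu(\calD\cap\calU_g)\le(\pi/2)^{d^n}\bigl(\tfrac2\pi\arcsin(\epsilon/2)\bigr)^{d^n}=\bigl(\arcsin(\epsilon/2)\bigr)^{d^n}$.

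\textbf{Step 3 (part 2: the counting bound).} Here no geometry is needed. Since $\epsilon<1$, any two distinct elements of $\calD_\pm$ differ by $\pm2$ in some diagonal entry, hence have operator-norm distance $2>2\epsilon$; by the triangle inequality each $V_i$ is $\epsilon$-close to at most one element of $\calD_\pm$. Mapping each element of $\calD_\pm\cap\calU_g$ to a witnessing circuit is therefore injective, so $|\calD_\pm\cap\calU_g|\le N\le(\pi/2)^{d^n}$, and dividing by $|\calD_\pm|=2^{d^n}$ yields the claimed fraction $(\pi/4)^{d^n}$.

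\textbf{Main obstacle.} The delicate part is the calibration in Step~1 together with the accompanying arc estimate in Step~2: one must count circuits precisely enough that $N\le(\pi/2)^{d^n}$ rather than some larger base to the power $d^n$---tracking the dependence on $k$ and $n$, the ``at most $g$'' versus ``exactly $g$'' distinction, and the length of the admissible arc for reference points that are only approximately on the unit circle because the $V_i$ need not be exactly diagonal. The remaining ingredients---the product structure of the Haar measure on $\calD$ and the separation argument for $\calD_\pm$---are routine.
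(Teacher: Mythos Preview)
Your proposal is correct and follows essentially the same counting/union-bound route as the paper: the paper parametrizes $\calD$ by $[0,2\pi)^{d^n}$, bounds the intersection of $\calD$ with a single $\epsilon$-ball by $(2\arcsin(\epsilon/2)/\pi)^{d^n}$, counts circuits by $\binom{n}{2}^{kg}$ (versus your $(kn^2)^g$---both give $\le(\pi/2)^{d^n}$ for the stated $g$), and handles part~(2) by the identical separation argument.

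The one subtlety you rightly flag but leave open---that the centres $V_i$ need not be diagonal, so the bare diagonal-entry constraint $|D_{jj}-(V_i)_{jj}|\le\epsilon$ alone could give an arc of half-angle up to $\arcsin\epsilon$ rather than $2\arcsin(\epsilon/2)$---is glossed over in the paper as well; it is resolved by using the full column constraint $\|D_{jj}\,|j\rangle-V_i|j\rangle\|\le\epsilon$, which (since $V_i|j\rangle$ is a unit vector) gives $|D_{jj}-(V_i)_{jj}|^2\le\epsilon^2-\bigl(1-|(V_i)_{jj}|^2\bigr)$ and hence an arc of half-angle at most $\arccos\bigl((1-\epsilon^2/2)/|(V_i)_{jj}|\bigr)\le 2\arcsin(\epsilon/2)$, recovering the exact constant.
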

\emph{Remarks:} Note that $\arcsin(\epsilon/2)\simeq\epsilon/2$ up to $O(\epsilon^3)$. A result closely related to (1) can be found in \cite{Nielsengeodesic} formulated in terms of geodesic lengths.
\begin{proof} (1) 
Write $U\in \calD$ as $U=\text{diag}\{\exp(i\alpha_1),\dots,\exp(i\alpha_{d^n})\}$ so that we can identify the Haar measure on $\calD$ with the normalized Lebesgue measure on the cube $[0,2\pi)^{d^n}$, which contains the vector of phases $\alpha$.  Consider two diagonal unitary matrices $U$ and $U'$ with distance $\epsilon$ with respect to the $\infty$-norm. Their diagonal elements satisfy $|e^{i\alpha'_j}-e^{i\alpha_j}|\leq \epsilon$, which by a simple geometric argument implies 
	\begin{align}
	\inf\limits_{k\in\bbZ}|\alpha'_j-\alpha_j-2\pi k|\leq \tilde{\epsilon}:= 2\arcsin(\frac{\epsilon}{2}).
	\end{align}
So if $B_\epsilon$ is an $\epsilon$-ball around $U$, then the normalized Haar measure of its intersection with $\calD$ can be bounded by
\begin{equation}
    \mu\big(\calD\cap B_\epsilon\big)\leq \frac{1}{(2\pi)^{d^n}}\int_{[-\tilde{\epsilon},\tilde{\epsilon}]^{d^n}}d\alpha_1\ldots d\alpha_{d^n}=\left(\frac{\tilde{\epsilon}}{\pi}\right)^{d^n}.
\end{equation}

The number of unitaries that can exactly be represented by a circuit of $g$ two-qudit gates that are taken from $\calG$ is at most $\binom{n}{2}^{kg}$. If we consider an $\epsilon$-ball  around each of them we obtain the set $\calU_g$. Hence,

\begin{eqnarray}
    \mu\big(\calD\cap\calU_g\big) &\leq& \binom{n}{2}^{kg} \mu\big(\calD\cap B_\epsilon\big)\label{eq:countingbound1}\\
    &\leq&\binom{n}{2}^{kg} \left(\frac{\tilde{\epsilon}}{\pi}\right)^{d^n} \leq \left(\frac{\tilde{\epsilon}}{2}\right)^{d^n} \label{eq:countingbound2},
\end{eqnarray}
where we have inserted $g$ in Eq.(\ref{eq:countingbound2}) and used the asymptotically sharp bound $\binom{n}{2}^{1/(2\ln n)}\leq e$.

(2) is proven along the same lines: we start with Eq.(\ref{eq:countingbound1}) with $\calD$ replaced by $\calD_\pm$ and where $\mu$ is now  the counting measure divided by $|\calD_\pm|=2^{d^n}$. Since two different unitaries $U,U'\in\calD_\pm$ satisfy $\|U-U'\|_\infty=2$, every ball of radius smaller than one contains at most one element of $\calD_\pm$. Consequently, $\mu(\calD_\pm\cap B_\epsilon)\leq 2^{-d^n}$. Inserting this and arguing like for Eq.(\ref{eq:countingbound2}) then leads to the claimed result.
\end{proof}
\note{add comment on efficient distinguishability}

\subsection{Maximally coherent states}\label{sec:maxcoh}
A state vector $|\varphi\rangle=\sum_j\varphi_j|j\rangle$ is called \emph{maximally coherent}
with respect to a given orthonormal product basis $\{|j\rangle\}_{j\in\bbZ_d^n}$ of $(\bbC^d)^{\otimes n}$  if all its amplitudes have the same modulus $|\varphi_j|=d^{-n/2}$. We denote the set of all maximally coherent state vectors in $(\bbC^d)^{\otimes n}$ by $\calS$. Since $\calS$ is the orbit of $ d^{-n/2}\sum_j |j\rangle$ under the diagonal unitary group $\calD$, the normalized  Haar measure of the latter induces a normalized measure on $\calS$. With a slight abuse of notation we will write $\mu$ for both.  
Note that despite the connection between $\calD$ and $\calS$, they behave very differently regarding the quantitative effect of parameter changes: whereas changing a single parameter suffices to move a diagonal unitary a constant $\epsilon$ (say $1/2$) away in operator norm, one has to change a large fraction of parameters, i.e. $\Omega(d^n)$, in order to move a maximally coherent state by $\epsilon=1/2$ in norm. 
Nevertheless a similar hardness-of-approximation result holds.
The following theorem shows that the subset of $\calS$ that can be approximated by circuits of sub-exponential size $O(d^n/\ln n)$ is double-exponentially small.  The same is true for the subset $\calS_\pm\subset\calS$ of $2^{d^n}$ maximally coherent states with real amplitudes $\varphi_j=\pm d^{-n/2}$.

\begin{theorem}[Hardness of approximating  maximally coherent states]\label{thm:hardmaxcoh}
    For $n,d\geq 2$ and a  universal gate set $\calG\subset U(d^2)$ with $k\in\bbN$ elements, let  $\calS_g$ be the set of maximally coherent states on $(\bbC^{d})^{\otimes n}$ that admit an $\epsilon$-approximation by a circuit of size $g$  in the sense that 
    \begin{equation*}
        \calS_g:=\big\{|\varphi\rangle\in\calS\;\big|\;\exists U\in U(d^n):\calC_0(U,\calG)\leq g, \big\| |\varphi\rangle-U|0\rangle^{\otimes n}\big\|\leq\epsilon\big\}.
    \end{equation*}
    If $g\leq d^n/(16k\ln n)$, then 
    \begin{eqnarray}
        \mu\big(\calS_g\big)&\leq&  (4e) \exp\left[-\frac{d^n}{16}\right],\qquad\forall \epsilon\in[0,1]\quad\text{and}\label{eq:harcohstatethm1}\\ 
        \frac{|\calS_g\cap\calS_\pm|}{|\calS_\pm|} &\leq & 2\exp\left[-\frac{d^n}{8}\right],\ \quad\qquad\forall\epsilon\in[0,3/4].\label{eq:harcohstatethm2}
    \end{eqnarray}
\end{theorem}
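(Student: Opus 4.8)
The plan is to run the standard counting/covering argument used for Theorem \ref{thm:harddiagonals}, with essentially all the new work concentrated in a small-ball estimate adapted to the geometry of $\calS$ rather than of $\calD$. First I would reformulate closeness as overlap: since all vectors involved are unit vectors, $\big\||\varphi\rangle-U|0\rangle^{\otimes n}\big\|\le\epsilon$ is equivalent to $\mathrm{Re}\langle\psi|\varphi\rangle\ge 1-\epsilon^{2}/2=:t$, where $|\psi\rangle:=U|0\rangle^{\otimes n}$. Hence $\calS_g=\bigcup_i\{\,|\varphi\rangle\in\calS:\ \mathrm{Re}\langle\psi_i|\varphi\rangle\ge t\,\}$, the union running over the at most $M$ unit vectors $|\psi_i\rangle=U_i|0\rangle^{\otimes n}$ produced by circuits of size $\le g$ over $\calG$; exactly as in the proof of Theorem \ref{thm:harddiagonals}, $M\le\binom{n}{2}^{kg}$.

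The crux is to bound, for an arbitrary fixed unit vector $|\psi\rangle=\sum_j\psi_j|j\rangle$, the measure $\mu\big(\{|\varphi\rangle\in\calS:\mathrm{Re}\langle\psi|\varphi\rangle\ge t\}\big)$. Parametrizing a generic element of $\calS$ as $|\varphi\rangle=d^{-n/2}\sum_j e^{i\alpha_j}|j\rangle$ with the phase vector $\alpha\in[0,2\pi)^{d^n}$ distributed according to $\mu$ (the normalized Lebesgue measure) and writing $\psi_j=r_je^{-i\theta_j}$, one has $\mathrm{Re}\langle\psi|\varphi\rangle=d^{-n/2}\sum_j r_j\cos(\alpha_j+\theta_j)$, a sum of independent, mean-zero, bounded random variables with $\sum_j r_j^{2}=\big\||\psi\rangle\big\|^{2}=1$. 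The essential difference from $\calD$ is that now no single coordinate can shift this sum by a constant, so concentration applies. Using the Bessel-function bound $\mathbb{E}\,e^{\lambda\cos\alpha}=I_0(\lambda)\le e^{\lambda^{2}/4}$, each summand is sub-Gaussian with parameter $r_j^{2}/(2d^n)$, hence the full sum is sub-Gaussian with parameter $1/(2d^n)$, and a Chernoff bound gives $\mu\big(\{|\varphi\rangle\in\calS:\mathrm{Re}\langle\psi|\varphi\rangle\ge t\}\big)\le e^{-t^{2}d^n}$. For $\epsilon\in[0,1]$ we have $t\ge 1/2$, so this is at most $e^{-d^n/4}$.

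Combining the two steps, $\mu(\calS_g)\le M\,e^{-d^n/4}\le\binom{n}{2}^{kg}e^{-d^n/4}\le n^{2kg}e^{-d^n/4}$, and the hypothesis $g\le d^n/(16k\ln n)$ makes $n^{2kg}=e^{2kg\ln n}\le e^{d^n/8}$, so $\mu(\calS_g)\le e^{-d^n/8}\le 4e\cdot e^{-d^n/16}$, which is Eq.(\ref{eq:harcohstatethm1}). For the sign-pattern statement Eq.(\ref{eq:harcohstatethm2}) I would repeat the three steps with the normalized counting measure on the $2^{d^n}$ elements of $\calS_\pm$ in place of $\mu$: here $|\varphi\rangle=d^{-n/2}\sum_j s_j|j\rangle$ with $s_j$ i.i.d.\ uniform in $\{\pm1\}$, and $\mathrm{Re}\langle\psi|\varphi\rangle=d^{-n/2}\sum_j s_j\,\mathrm{Re}\,\psi_j$ is a Rademacher sum with $\sum_j(\mathrm{Re}\,\psi_j)^{2}\le 1$; Hoeffding's lemma (sharp for $\pm1$ variables) gives the tail bound $e^{-t^{2}d^n/2}$, and since $\epsilon\le 3/4$ forces $t\ge 23/32$, the normalized counting measure of the bad sign patterns around each $|\psi_i\rangle$ is at most $e^{-t^{2}d^n/2}$; multiplying by $M$ and plugging in $g$ as above yields the claimed bound.

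The step I expect to be the main obstacle is the small-ball estimate. It is not enough that this measure be exponentially small: the rate has to beat the factor $e^{d^n/8}$ coming from the circuit count, which is why one needs the sharp sub-Gaussian constant for the cosine of a uniform phase — the Bessel bound $I_0(\lambda)\le e^{\lambda^{2}/4}$ — rather than the generic Hoeffding estimate, which would lose a factor $2$ in the exponent and fail to close the constants. Once this estimate is in hand, the remainder is the routine counting bookkeeping already carried out for diagonal unitaries.
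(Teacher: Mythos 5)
Your proposal is correct and follows the same high-level strategy as the paper: a covering argument bounding $|\calS_g|$ by at most $\binom{n}{2}^{kg}$ balls, combined with a small-ball (concentration) estimate for the overlap of a fixed unit vector with a Haar-random maximally coherent state, then Rademacher sums for $\calS_\pm$. The difference is purely in how the concentration step is implemented: the paper goes via the modulus $|\langle\phi,\psi\rangle|\geq\Re\langle\phi,\psi\rangle$ and cites the Hoeffding inequality for Steinhaus sums (Cor.~8.10 in \cite{foucart2013mathematical}), which yields the bound $Dt^2 e^{1-Dt^2}$ with a polynomial prefactor; you instead work directly with $\Re\langle\psi|\varphi\rangle$ as a sum of independent cosines and derive the sub-Gaussian tail $e^{-t^2 d^n}$ from first principles via $\mathbb{E}\,e^{\lambda\cos\alpha}=I_0(\lambda)\le e^{\lambda^2/4}$. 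These are the same underlying inequality --- the cited corollary is proved by essentially the same Bessel-function MGF computation --- so your route is a self-contained re-derivation rather than a different method. Your cleaner one-sided bound even gives $\mu(\calS_g)\le e^{-d^n/8}$, which you correctly observe dominates the stated $4e\,e^{-d^n/16}$. For $\calS_\pm$ your one-sided Hoeffding with $t=23/32$ yields $e^{-(529/2048)d^n}$ per ball and hence $e^{-0.133\,d^n}\le 2e^{-d^n/8}$ after multiplying by the circuit count, matching the paper's conclusion. Finally, your remark that the generic Hoeffding rate $e^{-t^2 D/2}$ would exactly cancel against $e^{D/8}$ and hence fail to close the constants for the stated value of $g$ is accurate and identifies precisely why the sharper Steinhaus/Bessel constant is what the paper relies on.
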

\begin{proof}
Define $D:=d^n$. Aiming at Eq.(\ref{eq:harcohstatethm1}), we begin with bounding how much of $\calS$ is covered by a single $\epsilon$-ball. To this end, consider an arbitrary unit vector  $\phi\in\bbC^D$ as the center of a ball $B_\epsilon$. Regarding $\psi$ as a $\mu$-uniform random variable with values in $\calS$, we can write
\begin{eqnarray}
    \mu\big(\calS\cap B_\epsilon\big) &=& \bbP_\psi\big[\|\phi-\psi\|\leq \epsilon\big]\nonumber\\
    &\leq& \bbP_\psi\Big[\big|\langle\phi,\psi\rangle\big|\geq 1-\frac{\epsilon^2}{2}\Big],
\end{eqnarray}
where we have used that $|\langle\phi,\psi\rangle|\geq \Re \langle\phi,\psi\rangle = 1- \frac12\|\phi-\psi\|^2$. That $\psi$ is $\mu$-uniformly random means that its $D$ components are, when multiplied by $\sqrt{D}$, i.i.d. random variables uniform on the complex unit circle---so-called \emph{Steinhaus variables}.
 Hence, we can apply the Hoeffding inequality for Steinhaus sums  from Cor.8.10 in \cite{foucart2013mathematical}, which yields
 \begin{eqnarray}
     \bbP_\psi\Big[\big|\langle\phi,\psi\rangle\big|\geq 1-\frac{\epsilon^2}{2}\Big] &\leq& D\Big(1-\frac{\epsilon^2}{2}\Big)^2\exp\left[1-D\Big(1-\frac{\epsilon^2}{2}\Big)^2\right]\nonumber\\
     &\leq& \frac{D}{4}\;e^{1-\frac{D}{4}}\;\leq\;4\exp\left[1-\frac{3D}{16}\right],\label{eq:Steinhausb2}
 \end{eqnarray}
 where the first inequality in Eq.(\ref{eq:Steinhausb2}) uses monotonicity in $\epsilon$ (for $D\geq 4$, $\epsilon\in[0,1]$) and sets $\epsilon=1$ and the last inequality exploits $1\leq e^{D/16} 16/D$.
 
 From here on, we can copy the corresponding part of the proof of Thm.\ref{thm:harddiagonals}, which after inserting the supposed upper bound on $g$ leads to
 \begin{eqnarray*}
     \mu(\calS_g) &\leq & \binom{n}{2}^{kg} \mu\big(\calS\cap B_\epsilon\big)\;\leq\;4 e^{D/8} e^{1-3D/16}\;=\; (4e)   \exp\left[-\frac{D}{16}\right].
 \end{eqnarray*}
 Finally, Eq.(\ref{eq:harcohstatethm2}) is proven along the same lines, with the only differences that $\mu$ is replaced by the normalized counting measure, the appropriate Hoeffding inequality is the one for Rademacher sums (Cor.8.8 in \cite{foucart2013mathematical}) and, for the sake of a simpler expression, we have used that $(1-\epsilon^2/2)^2>1/2$ holds for all $\epsilon\in[0,3/4]$.
\end{proof}

\subsection{Uniformly distributed sets of examples}
Finally, we come back to the explicit examples provided in Cor.\ref{cor:exexcc}. In order to obtain hardness-of-approximation results for them, we form families of those examples and show that these obey the same estimates as general diagonal unitaries and maximally coherent states (derived in  Sec.\ref{sec:approximatediagonal} and Sec.\ref{sec:maxcoh}, respectively). We begin with examples of unitaries.

Let us fix $n$ distinct primes $p_1,\ldots,p_n$ (e.g. the first $n$ primes) and consider the following set of unitaries on $(\bbC^d)^{\otimes n}$:
\begin{equation}\label{eq:Ut}
    U_t:=\sum_{j\in\bbZ_d^n}|j\rangle\langle j| \exp\left[it\prod_{k=1}^n p_k^{j_k/d}\right],\quad t\in\bbN.
\end{equation}
From Cor.\ref{cor:exexcc} we know that each of them satisfies $\calC_0(U_t)\geq d^{n-4}$. The following theorem shows that the vast majority of the $U_t$'s are also hard to approximate in the sense that for every $\epsilon\in[0,1)$ all but a double-exponentially small fraction of the $U_t$'s satisfy $\calC_\epsilon(U_t,\calG)=\Omega(d^n/\ln n)$. The reason for this is, loosely speaking, that the $U_t$'s are uniformly distributed within the set of all diagonal unitaries. As a consequence,  the same hardness-of-approximation result (i.e., Thm.\ref{thm:harddiagonals}) applies to asymptotically large sets of them. To phrase it more mathematically, the sequence of atomic probability measures $\mu_N(U_t):=1/N$ supported on $U_1,\ldots,U_N$ converges weakly to the Haar measure of the group of all diagonal unitaries in the limit   $N\rightarrow\infty$.
\begin{theorem}[Hardness of approximation -- unitary examples]\label{thm:hardunitaryex}
    For $\epsilon\in[0,1)$, a  universal gate set $\calG\subset U(d^2)$ with $k\in\bbN$ elements and 
    $$ g:=\left(\frac{\ln\frac{\pi}{2}}{2 k}\right)\frac{d^n}{\ln n},$$
    define by $r_N:=\frac1N|\big\{U_t\;|\;\calC_\epsilon(U_t,\calG)\leq g,\;t\leq N\big\}|$ the fraction of the first $N$ unitaries defined in Eq.(\ref{eq:Ut}) that have circuit complexity not larger than $g$. Then 
    \begin{equation}
        \lim_{N\rightarrow\infty} r_N \leq \big(\arcsin(\epsilon/2)\big)^{d^n}.
    \end{equation}
\end{theorem}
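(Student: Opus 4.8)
The plan is to combine the single-ball estimate already established in the proof of Theorem~\ref{thm:harddiagonals} with an equidistribution statement for the phase vectors of the $U_t$. Writing $D:=d^n$ and $\phi(j):=\prod_{k=1}^n p_k^{j_k/d}$ for $j\in\bbZ_d^n$, the diagonal unitary $U_t$ is determined by the point $\theta_t:=\big(t\,\phi(j)\bmod 2\pi\big)_{j\in\bbZ_d^n}$ in the torus $\mathbb{T}^D=[0,2\pi)^D$. The first step is to invoke Weyl's equidistribution theorem: because the $\phi(j)$ are (by Corollary~\ref{cor:Besicovitch}, as used in Theorem~\ref{thm:largesets}) linearly independent over $\bbQ$ together with $1$ --- more precisely, no nontrivial integer combination $\sum_j m_j\phi(j)$ lies in $2\pi\bbZ\setminus\{0\}$, and in fact none lies in $\bbZ$ at all since the $\phi(j)$ are linearly independent over $\bbQ$ --- the sequence $(\theta_t)_{t\in\bbN}$ is equidistributed in $\mathbb{T}^D$ with respect to normalized Lebesgue measure, which under the identification $\calD\simeq\mathbb{T}^D$ is exactly the Haar measure $\mu$ on $\calD$. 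Concretely, for every continuous $f:\calD\to\bbR$ one has $\frac1N\sum_{t=1}^N f(U_t)\to\int_\calD f\,d\mu$.

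The second step is to pass from this weak-convergence statement to the bound on $r_N$. The obstacle here is that the set $\calD\cap\calU_g$ whose $\mu$-measure we want to compare against is not open or closed in a way that immediately cooperates with weak convergence: $\calU_g$ is a finite union of closed $\epsilon$-balls, hence closed, so $\calD\cap\calU_g$ is closed and by the portmanteau theorem $\limsup_N \mu_N(\calD\cap\calU_g)\le\mu(\calD\cap\calU_g)$. Since $r_N=\mu_N(\calD\cap\calU_g)$ with $\mu_N:=\frac1N\sum_{t=1}^N\delta_{U_t}$, and since $\mu_N\to\mu$ weakly by step one, we conclude $\limsup_N r_N\le\mu(\calD\cap\calU_g)$. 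Here I should be slightly careful that the closed balls defining $\calU_g$ are taken in operator norm on the ambient $U(D)$; intersecting with $\calD$ and using the explicit $\arcsin$ geometry from the proof of Theorem~\ref{thm:harddiagonals} is exactly what makes $\calD\cap B_\epsilon$ a product of closed arcs, so $\calD\cap\calU_g$ is a finite union of such boxes and is genuinely closed.

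The final step is to quote Theorem~\ref{thm:harddiagonals}(1), which with this very same choice of $g$ gives $\mu(\calD\cap\calU_g)\le\big(\arcsin(\epsilon/2)\big)^{d^n}$. Chaining the two displays yields $\lim_{N\to\infty} r_N\le\limsup_N r_N\le\mu(\calD\cap\calU_g)\le\big(\arcsin(\epsilon/2)\big)^{d^n}$, which is the claim (the limit exists and is bounded as stated; one need not separately argue existence, only the upper bound on the limsup, but if desired existence follows because $\mathbf{1}_{\calD\cap\calU_g}$ differs from a continuous function only on the boundary of the box-union, which has $\mu$-measure zero, so $\mu_N(\calD\cap\calU_g)$ actually converges). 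The main obstacle is really step one: verifying that Weyl's criterion applies, i.e. that $\{\phi(j)\}_{j\in\bbZ_d^n}\cup\{2\pi\}$ has no nontrivial $\bbZ$-relation. This follows from the $\bbQ$-linear independence of the $\phi(j)$ (Besicovitch, already invoked in the paper) together with the transcendence of $\pi$ (Lindemann), since $\pi$ cannot be a $\bbQ$-linear combination of the algebraic numbers $\phi(j)$; I would spell this out as a one-line remark rather than a separate lemma.
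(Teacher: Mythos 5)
Your proposal is correct and follows essentially the same route as the paper: identify each $U_t$ with a point on the torus, verify Weyl's equidistribution criterion from the $\bbQ$-linear independence of the $\phi(j)$ (Besicovitch) together with the transcendence of $\pi$, and then push the measure estimate of Theorem~\ref{thm:harddiagonals}(1) through the equidistribution statement. The only cosmetic difference is in the last step: the paper applies the uniform distribution theorem to a Jordan measurable set (a finite union of bounded convex sets) to get an actual limit equal to $\mu(\calD\cap\calU_g)$, whereas you invoke the closed-set half of the portmanteau theorem to get $\limsup_N r_N\le\mu(\calD\cap\calU_g)$, then add a remark about boundary measure zero to recover convergence --- these are two standard faces of the same equidistribution fact.

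Two small inaccuracies in your write-up are worth flagging, though neither affects the conclusion. First, ``the $\phi(j)$ are linearly independent over $\bbQ$ together with $1$'' and ``none lies in $\bbZ$ at all'' are both false as stated, since $\phi(0,\dots,0)=1$; what you actually need (and what you correctly isolate in your closing sentence) is that no nonzero $\bbZ$-combination of the $\phi(j)$ vanishes (Besicovitch) and that no nonzero $\bbZ$-combination of algebraic numbers can equal a nonzero multiple of the transcendental $\pi$. Second, $\calD\cap B_\epsilon$ need not literally be a product of arcs since the centers of the balls in $\calU_g$ are circuit products and hence generally not diagonal; but closedness of $\calD\cap B_\epsilon$ (intersection of two closed sets) is all that portmanteau requires, and your argument only uses that.
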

\begin{proof}
    The statement follows from Thm.\ref{thm:harddiagonals} as a consequence of Weyl's uniform version of  Kronecker's density theorem \cite{Weyl1916}. In order to see this, let us parametrize an arbitrary diagonal unitary $U={\rm diag}(e^{2\pi i x_1},e^{2\pi i x_2},\ldots)$ by its vector of phases $x$, which is an element of the unit cube $Q:=[0,1)^{d^n}$. The Lebesgue measure on $Q$ then corresponds to the Haar measure of the group of diagonal unitaries. It will be convenient to identify $Q$ with $\bbR^{d^n} \mod 1$, i.e., to allow for components outside the unit-interval and consider them modulo 1.
    
    Each $U_t$ is parametrized by a vector of the form $t\theta\in Q$ where $\theta_j:=\phi(j)/(2\pi)$ with $\phi(j)$ defined in Eq.(\ref{eq:mainexampleset}).
    As shown in Cor.\ref{cor:Besicovitch} in Appendix \ref{app:tnt}, the $d^n$ components of $\theta$ are linearly independent over $\bbQ$. Moreover,  they are all transcendental due to the division by $\pi$ and the fact that the $\phi(j)$'s are algebraic. Hence  the  numbers $1,\theta_1,\theta_2,\ldots$ are linearly independent over $\bbQ$ so that by Weyl's uniform distribution theorem (see p.48 in \cite{kuipers2012uniform}) the sequence $(t\theta)_{t\in\bbN}$ is uniformly distributed $\mod 1$ in $Q$. This means that for all Jordan measurable subsets $J\subseteq Q$ we have 
    \begin{equation}
        \lim_{N\rightarrow\infty}\frac1N \big|\big\{t\theta\;|\; t\theta\in J, t\leq N\big\}\big|=\lambda(J),
    \end{equation}
    where $\lambda$ is the Lebesgue measure and all components are understood $\mod 1$.
    Applying this to the specific set $J$ that corresponds to all diagonal unitaries with circuit complexity not larger than $g$, we obtain in the notation of Thm.\ref{thm:harddiagonals} and by using Eq.(\ref{eq:arcsinbound}):
    \begin{equation}
        \lim_{N\rightarrow\infty}\frac1N \big|\big\{U_t\;|\; U_t\in\calU_g, t\leq N\big\}\big|=\mu(\calD\cap \calU_g)\leq \big(\arcsin(\epsilon/2)\big)^{d^n}.
    \end{equation}
    The chosen $J$ is Jordan measurable since it is a finite union of at most $\binom{n}{2}^{kg}$ bounded convex, and thus Jordan measurable, subsets.
\end{proof}

Now let us turn to the state examples. Still assuming distinct primes $p_1,\ldots,p_n$ we define the following family of maximally coherent states:
\begin{equation}\label{eq:psit}
    \psi_t :=d^{-n/2}\sum_{j\in\bbZ_d^n}|j\rangle \exp\left[it\prod_{k=1}^n p_k^{j_k/d}\right],\quad t\in\bbN.
\end{equation}
Again, we know from Cor.\ref{cor:exexcc} that each of them satisfies $\calC_0(\psi_t)\geq d^{n-4}$. Apart from a double-exponentially small fraction, they are also hard to approximate:

\begin{theorem}[Hardness of approximation -- state examples]\label{thm:hardstateex}
    For $n,d\geq 2$, $\epsilon\in[0,1]$, a  universal gate set $\calG\subset U(d^2)$ with $k\in\bbN$ elements and 
    $$ g:=\left(\frac{1}{16 k}\right)\frac{d^n}{\ln n},$$
    define by $r_N:=\frac1N|\big\{\psi_t\;|\;\psi_t\in\calS_g,\;t\leq N\big\}|$ the fraction of the first $N$ examples defined in Eq.(\ref{eq:psit}) that admit an $\epsilon$-approximation by a circuit of size at most $g$. Then 
    \begin{equation}
        \lim_{N\rightarrow\infty} r_N \leq (4e) \exp\left[-\frac{d^n}{16}\right].
    \end{equation}
\end{theorem}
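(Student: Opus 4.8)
The plan is to mirror the proof of Theorem \ref{thm:hardunitaryex} almost verbatim, replacing the diagonal-unitary estimate of Theorem \ref{thm:harddiagonals} by the maximally-coherent-state estimate of Theorem \ref{thm:hardmaxcoh}. First I would set $D:=d^n$ and recall that each $\psi_t$ is obtained from the uniform maximally coherent state $d^{-n/2}\sum_j|j\rangle$ by applying the diagonal unitary whose vector of phases is $t\theta$, with $\theta_j:=\phi(j)/(2\pi)$ and $\phi(j)$ as in Eq.(\ref{eq:mainexampleset}). Since $\calS$ is the orbit of the uniform state under the diagonal group $\calD$, and $\mu$ on $\calS$ is by definition the pushforward of the Haar measure on $\calD$, a Jordan-measurable subset of $\calS$ pulls back to a Jordan-measurable (in fact a finite union of convex) subset of the phase cube $Q=[0,1)^{d^n}$. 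The key input, exactly as before, is that $1,\theta_1,\theta_2,\ldots,\theta_{d^n}$ are linearly independent over $\bbQ$: the $\theta_j$ are $\bbQ$-linearly independent by Cor.\ref{cor:Besicovitch}, and they are transcendental (hence independent together with $1$) because of the division by $\pi$ while the $\phi(j)$ are algebraic. Weyl's uniform distribution theorem (p.48 in \cite{kuipers2012uniform}) then gives that $(t\theta)_{t\in\bbN}$ is equidistributed $\bmod\ 1$ in $Q$, so that for every Jordan-measurable $J\subseteq Q$ the asymptotic frequency of $t$ with $t\theta\in J$ equals $\lambda(J)$.

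Next I would identify the relevant set $J$. Let $J\subseteq Q$ be the set of phase vectors $x$ such that the associated maximally coherent state $d^{-n/2}\sum_j e^{2\pi i x_j}|j\rangle$ lies in $\calS_g$, i.e.\ admits an $\epsilon$-approximation $\|\,|\varphi\rangle-U|0\rangle^{\otimes n}\|\leq\epsilon$ by some $U$ with $\calC_0(U,\calG)\leq g$. Exactly as in the proof of Theorem \ref{thm:harddiagonals}(1), there are at most $\binom{n}{2}^{kg}$ unitaries exactly representable by a circuit of $g$ gates from $\calG$, and $\calS_g$ is covered by the $\epsilon$-balls around the states they produce. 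Pulling back through the orbit map, $J$ is a finite union of at most $\binom{n}{2}^{kg}$ preimages of $\epsilon$-balls; each such preimage is a bounded set and, more importantly, its Lebesgue measure is controlled by the single-ball estimate inside the proof of Theorem \ref{thm:hardmaxcoh}, namely $\mu(\calS\cap B_\epsilon)\leq 4\exp[1-3D/16]$ for $D\geq 4$, $\epsilon\in[0,1]$. One subtlety is Jordan measurability of $J$: I would argue, as in Theorem \ref{thm:hardunitaryex}, that each preimage is convex (the set of phase vectors whose corresponding state is within $\epsilon$ of a fixed $|\varphi\rangle$ is convex, since $x\mapsto d^{-n/2}\sum_j e^{2\pi i x_j}|j\rangle$ composed with the norm distance to a fixed vector is $\ldots$) — actually this needs a moment's care, so the cleaner route is to observe that each such set is the preimage under a continuous map of a closed ball, hence closed with boundary of measure zero, hence Jordan measurable; a finite union of Jordan-measurable sets is Jordan-measurable.

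Then I would assemble the bound. By Weyl equidistribution applied to $J$,
\begin{equation*}
    \lim_{N\to\infty} r_N = \lambda(J) = \mu(\calS_g) \leq \binom{n}{2}^{kg}\,\mu(\calS\cap B_\epsilon) \leq \binom{n}{2}^{kg}\cdot 4\exp\!\left[1-\tfrac{3D}{16}\right].
\end{equation*}
Inserting $g=d^n/(16k\ln n)=D/(16k\ln n)$ and using the asymptotically sharp bound $\binom{n}{2}^{1/(2\ln n)}\leq e$ gives $\binom{n}{2}^{kg}\leq e^{D/8}$, whence the right-hand side is at most $4e^{D/8}e^{1-3D/16}=(4e)\exp[-D/16]$, which is the claim. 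For $n=2$ one should note that $\binom{n}{2}=1$ makes the bound only stronger, and the requirement $D=d^n\geq 4$ holds since $n,d\geq 2$.

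The main obstacle, such as it is, is bookkeeping rather than substance: one must be careful that the orbit map $\calD\to\calS$ and the phase parametrization $Q\to\calS$ are compatible, so that "Jordan-measurable in $\calS$" and "Jordan-measurable in $Q$" really match up and the measures $\mu$ and $\lambda$ correspond; and one must confirm that the single-ball covering estimate from the proof of Theorem \ref{thm:hardmaxcoh} is genuinely a statement about $\mu$-measure of subsets of $\calS$ (it is, since $\mu(\calS\cap B_\epsilon)=\bbP_\psi[\|\phi-\psi\|\leq\epsilon]$ for $\psi$ drawn from $\mu$) and hence transfers directly. Everything else is a line-by-line copy of the proof of Theorem \ref{thm:hardunitaryex}.
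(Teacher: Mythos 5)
Your proposal follows exactly the route the paper takes: the paper's proof of Theorem~\ref{thm:hardstateex} is a one-line sketch (``copying the proof of Thm.~\ref{thm:hardunitaryex} we obtain that $r_N\to\mu(\calS_g)$, and insert the bound from Thm.~\ref{thm:hardmaxcoh}''), and your write-up fills in precisely those steps, with the right arithmetic $(4e)\,e^{D/8}e^{-3D/16}=(4e)e^{-D/16}$.

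The one place where your reasoning actually goes wrong is the Jordan-measurability step. You correctly doubt that the preimage of an $\epsilon$-ball under $x\mapsto d^{-n/2}\sum_j e^{2\pi ix_j}|j\rangle$ is convex (the condition $\sum_j|\phi_j|\cos(2\pi x_j-\arg\phi_j)\geq(1-\epsilon^2/2)d^{n/2}$ is a superlevel set of a sum of cosines, which is not concave in general), so the ``finite union of convex sets'' argument used in the unitary case does not literally carry over. But your replacement argument --- ``preimage under a continuous map of a closed ball, hence closed with boundary of measure zero'' --- is a non sequitur: a closed set need not have boundary of Lebesgue measure zero (fat Cantor sets are preimages of closed sets under continuous maps), so closedness alone does not give Jordan measurability. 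What saves you here is that the defining function is real analytic and non-constant, so the boundary of each preimage is contained in a level set $\{f=c\}$ of a real-analytic function and therefore has Lebesgue measure zero; alternatively, one may replace $J$ by a Jordan-measurable (e.g.\ finite union of coordinate boxes) superset of the same measure bound and use that Weyl equidistribution gives $\limsup_N r_N\leq\lambda(J')$ for any Jordan-measurable $J'\supseteq J$. Either patch closes the gap; as written, your justification of Jordan measurability is incorrect even though the conclusion is true.
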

\begin{proof} ({\it sketch})
    Copying the proof of Thm.\ref{thm:hardunitaryex} we obtain that $r_N\rightarrow \mu(\calS_g)$ from which the result follows by inserting the upper bound on $\mu(\calS_g)$ from Thm.\ref{thm:hardmaxcoh}.
\end{proof}

Despite the intended greater physical relevance, the set-based approximation results of Thm.\ref{thm:hardunitaryex} and Thm.\ref{thm:hardstateex} lose some of their appeal compared to the individual exact-generation result of Cor.\ref{cor:exexcc}. First, there is the statistical nature of the results, which misses our goal to pinpoint explicit examples. Second, while Cor.\ref{cor:exexcc} clearly produces examples of exponential circuit complexity with constant description complexity, the hardness-of-approximation results require asymptotically large $t$ so that the description complexity of individual hard instances is not under control. 

It should also be noted that not all unitaries $U_t$ or states $\psi_t$ are hard to approximate---precisely due to the uniform density argument used in the proofs. For instance, for every $\epsilon>0$ there is a $t\in\bbN$ such that $\|U_t-\1\|\leq\epsilon$ and similarly $\|\psi_t-\psi_0\|\leq\epsilon$, where $\psi_0$ is a product state.

Remedying these shortcomings would be desirable, but might run up against complexity-theoretic obstacles.

\note{+discussion/conclusion (height,degree,depth,etc)}
\note{examples approximate 1 and +}

\vspace{0.3cm}

\noindent {\it Acknowledgments.} YJ thanks Aram Harrow for an insightful discussion, in particular for pointing to `algebraic independence'.

\vspace{0.3cm}

\noindent {\it Statements and Declarations.}
This work has been partially supported by the Deutsche Forschungsgemeinschaft (DFG, German Research Foundation) under Germany's Excellence Strategy EXC-2111 390814868 and via the SFB/Transregio 352. YJ acknowledges support from the TopMath Graduate Center of the TUM Graduate School and the TopMath Program of the Elite Network of Bavaria.

\newpage\appendix
\color{black}
\section{Results in transcendental number theory}\label{app:tnt}

In this section, we collect results in transcendental number theory that are used in the main text. For a general overview of the topic we refer to the textbooks of Baker \cite{baker_1975} or Murty and Rath \cite{MurtyRath}.

\begin{apptheorem}[Lindemann-Weierstrass, Thm. 1.4 in \cite{baker_1975}]\label{thm:LW}
If $\alpha_1,\dots,\alpha_n$ are algebraic numbers that are linearly independent over $\bbQ$, then $e^{\alpha_1}, \dots, e^{\alpha_n}$ are algebraically independent.
\end{apptheorem}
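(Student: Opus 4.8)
The plan is to deduce the statement from the classical linear-independence form of the Lindemann--Weierstrass theorem and to establish that form by Hermite's auxiliary-polynomial method. First I would reduce to linear independence: suppose some nonzero $P\in\overline{\bbQ}[x_1,\dots,x_n]$ satisfied $P(e^{\alpha_1},\dots,e^{\alpha_n})=0$; writing $P=\sum_k c_k\,x^{k}$ over a finite set of nonnegative integer exponent vectors $k$ with every $c_k\in\overline{\bbQ}\setminus\{0\}$, this reads $\sum_k c_k\,e^{\langle k,\alpha\rangle}=0$ where $\langle k,\alpha\rangle:=\sum_i k_i\alpha_i$. Since the $\alpha_i$ are linearly independent over $\bbQ$ and $k\mapsto\langle k,\alpha\rangle$ is $\bbZ$-linear, the exponents $\langle k,\alpha\rangle$ are pairwise distinct algebraic numbers --- and this is the only point at which the hypothesis is used. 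Hence it suffices to prove: \emph{for pairwise distinct algebraic numbers $\beta_1,\dots,\beta_m$ and algebraic numbers $b_1,\dots,b_m$ not all zero, $\sum_j b_j e^{\beta_j}\neq0$.}

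For this I would argue by contradiction. After discarding the zero terms (so that all $b_j\neq0$), dividing the relation by $e^{\beta_1}$ to produce a nonzero constant term, and performing the standard normalizations --- clearing denominators so that the coefficients become algebraic integers, and adjoining all conjugates of the exponents, with the coefficients made constant on conjugacy classes --- one is reduced to a relation $c_0+\sum_{k=1}^r c_k e^{\gamma_k}=0$ in which $c_0\neq0$, the $\gamma_k$ are nonzero, the set $\{\gamma_k\}$ is closed under Galois conjugation, and all the $c_k$ are algebraic integers. Fixing $\ell\in\bbN$ with every $\ell\gamma_k$ an algebraic integer and a large prime $p$, I would introduce Hermite's polynomial and its exponential antiderivative,
\[
 f(x):=\frac{\ell^{rp}\,x^{p-1}\prod_{k}(x-\gamma_k)^{p}}{(p-1)!},\qquad F:=\sum_{j\ge0}f^{(j)},\qquad \frac{d}{dx}\!\left(e^{-x}F(x)\right)=-e^{-x}f(x),
\]
and consider $J:=\sum_{k}c_k\,e^{\gamma_k}\!\int_0^{\gamma_k}e^{-x}f(x)\,dx$. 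Integration by parts together with the relation gives $J=-c_0 F(0)-\sum_k c_k F(\gamma_k)$, and the denominator $(p-1)!$ is exactly cancelled by the factorials coming from differentiation, so $J$ is an algebraic integer. Because $f$ vanishes to order $p$ at each $\gamma_k$, every $f^{(j)}(\gamma_k)$, and hence $\sum_k c_kF(\gamma_k)$, is divisible by $p$; but $f$ vanishes only to order $p-1$ at $0$, and $f^{(p-1)}(0)=\pm\,\ell^{rp}\prod_k\gamma_k^{p}$ is a $p$-th power of a fixed nonzero algebraic integer, hence --- together with $p\nmid c_0$ --- the term $-c_0F(0)$, and therefore $J$ itself, is not divisible by $p$ for $p$ large; in particular $J\neq0$. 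Since $\{\gamma_k\}$ is conjugation-closed, all Galois conjugates of $J$ are of the same shape, so the field norm $\mathrm{N}(J)$ is a nonzero rational integer and $|\mathrm{N}(J)|\ge1$. On the other hand, estimating $f$ and the integrals over the fixed compact segments $[0,\gamma_k]$ yields $|J^{(i)}|\le C^{p}/(p-1)!$ uniformly over the conjugates $J^{(i)}$, so that $|\mathrm{N}(J)|\le\big(C^{p}/(p-1)!\big)^{D}\to0$ for a fixed positive integer $D$ (the number of Galois conjugates) --- contradicting $|\mathrm{N}(J)|\ge1$ once $p$ is large enough.

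The analytic ingredients (the integration-by-parts identity and the bound $|f|\le C^{p}/(p-1)!$ on a compact set) are routine; I expect the real work to be the bookkeeping of the middle step: carrying out the reduction to a relation with algebraic-integer coefficients and a conjugation-closed exponent set that still carries a distinguished term not divisible by $p$, and then correctly propagating the $p$-divisibility of the various algebraic integers through the passage to the norm.
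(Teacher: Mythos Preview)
The paper does not prove this statement at all; it is recorded in the appendix as a quotation from Baker's monograph, with the reference standing in lieu of a proof. Your sketch is the classical Hermite auxiliary-polynomial argument, which is precisely the proof Baker presents in Chapter~1 of \cite{baker_1975}, so your approach coincides with the one the paper implicitly invokes. The reduction from algebraic independence to the linear form is correct and is the only place the $\bbQ$-linear independence of the $\alpha_i$ enters; the Hermite step is outlined accurately, and the points you flag as ``bookkeeping'' (symmetrizing over conjugates so that the norm is a rational integer, and tracking the $p$-divisibility of $F(0)$ versus $F(\gamma_k)$) are exactly where the details live, but your description of them is the standard one and would go through.
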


\begin{appcor}\label{cor:E}
If $S\subset\overline{\bbQ}$ is an $n$-dimensional vector space over $\bbQ$ and $E:=\{e^\lambda|\lambda\in S\}$, then  $\gamma(E)=n$.    
\end{appcor}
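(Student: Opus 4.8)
The plan is to deduce Corollary \ref{cor:E} from the Lindemann–Weierstrass theorem (Thm.\ref{thm:LW}) by a basis argument. Let $\lambda_1,\dots,\lambda_n$ be a $\bbQ$-basis of $S$. Since $S\subset\overline{\bbQ}$, these are algebraic numbers that are linearly independent over $\bbQ$, so Thm.\ref{thm:LW} applies and gives that $e^{\lambda_1},\dots,e^{\lambda_n}$ are algebraically independent. This already shows $\gamma(E)\geq n$, because $\{e^{\lambda_1},\dots,e^{\lambda_n}\}\subseteq E$ is a set of $n$ algebraically independent elements.

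For the reverse inequality $\gamma(E)\leq n$, the key observation is that every element of $E$ lies in the algebraic closure of $\bbQ(e^{\lambda_1},\dots,e^{\lambda_n})$. Indeed, an arbitrary $\lambda\in S$ can be written as $\lambda=\sum_{k=1}^n q_k\lambda_k$ with $q_k\in\bbQ$; writing $q_k=a_k/b$ with $a_k\in\bbZ$ and a common denominator $b\in\bbN$, we get $e^{\lambda}=\big(\prod_{k=1}^n e^{\lambda_k a_k}\big)^{1/b}$, so $\big(e^{\lambda}\big)^b$ is a Laurent monomial in the $e^{\lambda_k}$, hence lies in $\bbQ(e^{\lambda_1},\dots,e^{\lambda_n})$, and therefore $e^{\lambda}$ is algebraic over that field. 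Consequently $E\subseteq\overline{\bbQ(e^{\lambda_1},\dots,e^{\lambda_n})}$, and using Eq.(\ref{eq:gammafield}) together with monotonicity of $\gamma$ under inclusion we obtain
\begin{equation}
    \gamma(E)\leq\gamma\left(\overline{\bbQ(e^{\lambda_1},\dots,e^{\lambda_n})}\right)=\gamma\big(\{e^{\lambda_1},\dots,e^{\lambda_n}\}\big)\leq n.
\end{equation}
Combining the two inequalities yields $\gamma(E)=n$.

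The only mild subtlety — the "hard part", though it is not really hard — is the passage from $\bbQ$-linear combinations with rational (not integer) coefficients to genuine algebraic expressions in the $e^{\lambda_k}$; this is handled by clearing denominators as above, at the cost of taking a $b$-th root, which keeps us inside the algebraic closure. Everything else is a direct invocation of Thm.\ref{thm:LW} and the field-invariance of $\gamma$ recorded in Eq.(\ref{eq:gammafield}). One should also note for completeness that the value $n$ is independent of the choice of $\bbQ$-basis, which is automatic since $\gamma(E)$ was defined intrinsically.
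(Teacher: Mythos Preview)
Your proof is correct and follows essentially the same approach as the paper: both obtain $\gamma(E)\geq n$ directly from Lindemann--Weierstrass, and both obtain $\gamma(E)\leq n$ by turning a $\bbQ$-linear relation among elements of $S$ into a multiplicative (hence algebraic) relation among the corresponding exponentials. The paper phrases the upper bound as ``any $m>n$ elements of $E$ are algebraically dependent'' via $\prod_k (e^{a_k})^{c_k}=1$, while you phrase it as ``$E\subseteq\overline{\bbQ(e^{\lambda_1},\dots,e^{\lambda_n})}$''; these are two sides of the same coin, and in fact your explicit denominator-clearing step is a slightly more careful version of what the paper leaves implicit.
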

\begin{proof}
    The Lindemann-Weierstrass theorem \ref{thm:LW} implies that $\gamma(E)\geq n$. For the converse inequality, suppose that $\{a_1,\ldots,a_m\}\subset S$ fulfill a non-trivial linear relation $\sum_{k=1}^m c_k a_k=0$ for some $c_k\in\bbQ$. Then 
    $$\prod_{k=1}^m \left(e^{a_k}\right)^{c_k}=1, $$ which implies that the $\{e^{a_k}\}_{k=1}^m$ are algebraically dependent. Therefore, $\gamma(E)$ cannot exceed $n$.
\end{proof}
\begin{apptheorem}[Besicovitch, \cite{besicovitch1940linear}]\label{thm:Besicovitch}
	Let $p_1,p_2,\dots,p_s$ be distinct primes, $b_1,b_2,\dots,b_s$ positive integers not divisible by any of these primes and $a_i:=(b_i p_i)^{1/n_i}$ positive roots for $i=1,\ldots,s$ and $n_i\in\bbN$.
	 If $P\in\bbQ[x_1,x_2,\dots,x_s]$ is a polynomial with rational coefficients of degree less than or equal to $n_1-1$ with respect to $x_1$, less than or equal to $n_2-1$ with respect to $x_2$, and so on, then $P(a_1,a_2,\dots,a_s)=0$ can hold only if $P=0$. 
\end{apptheorem}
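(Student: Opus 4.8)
\noindent\emph{Proof sketch.} The plan is to reduce the statement to the field‑degree identity $[\bbQ(a_1,\dots,a_s):\bbQ]=n_1\cdots n_s$ and to prove the latter by induction on the number $s$ of primes. For the reduction, observe that the $n_1\cdots n_s$ monomials $\prod_{i=1}^s a_i^{k_i}$ with $0\le k_i<n_i$ span $\bbQ(a_1,\dots,a_s)$ over $\bbQ$ --- the relations $a_i^{n_i}=b_ip_i\in\bbQ$ allow one to reduce every exponent below $n_i$. A nonzero $P\in\bbQ[x_1,\dots,x_s]$ with $\deg_{x_i}P\le n_i-1$ and $P(a_1,\dots,a_s)=0$ is precisely a nontrivial $\bbQ$‑linear dependence among these spanning monomials; hence such a $P$ exists if and only if $[\bbQ(a_1,\dots,a_s):\bbQ]<n_1\cdots n_s$. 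So it suffices to show the degree is maximal.

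The base case $s=1$ is immediate: $x^{n_1}-b_1p_1$ is Eisenstein at $p_1$ (its constant term $-b_1p_1$ is divisible by $p_1$ but, since $p_1\nmid b_1$, not by $p_1^2$), hence irreducible over $\bbQ$, so $[\bbQ(a_1):\bbQ]=n_1$. For the inductive step, put $K:=\bbQ(a_1,\dots,a_{s-1})$, which has degree $n_1\cdots n_{s-1}$ by the induction hypothesis (applied to the distinct primes $p_1,\dots,p_{s-1}$, for which the divisibility conditions on $b_1,\dots,b_{s-1}$ still hold). It remains to prove $[K(a_s):K]=n_s$, i.e.\ that $x^{n_s}-b_sp_s$ is irreducible over $K$. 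By the Vahlen--Capelli criterion for binomials $x^n-c$, and because $b_sp_s>0$ rules out the exceptional case involving $-4$ times a fourth power, this is equivalent to the assertion that $(b_sp_s)^{1/q}\notin K$ for every prime $q\mid n_s$.

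To prove this I would pass to the completion at $p_s$. Since $p_1,\dots,p_{s-1}$ are distinct from $p_s$ and $p_s\nmid b_i$, every $b_ip_i$ is a $p_s$‑adic unit, so for any prime $\mathfrak p$ of $K$ above $p_s$ the completion $K_{\mathfrak p}$ is generated over $\bbQ_{p_s}$ by roots of the unit‑binomials $x^{n_i}-b_ip_i$. Adjoining such a root produces an unramified extension (the part coming from the prime‑to‑$p_s$ factor of $n_i$) followed by a totally ramified extension of $p_s$‑power degree, so the ramification index $e(\mathfrak p/p_s)$ is a power of $p_s$. Now suppose $q\neq p_s$ and $(b_sp_s)^{1/q}\in K$: then $\bbQ\bigl((b_sp_s)^{1/q}\bigr)\subseteq K$ is totally ramified of degree $q$ at $p_s$ (because $x^q-b_sp_s$ is Eisenstein at $p_s$), which forces $q\mid e(\mathfrak p/p_s)$ --- impossible, as $e(\mathfrak p/p_s)$ is a power of the prime $p_s\neq q$. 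This settles every $q\mid n_s$ with $q\neq p_s$, and therefore the whole theorem in the case $p_s\nmid n_s$.

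The remaining case $q=p_s$ (so $p_s\mid n_s$) is where the real difficulty sits: here one still needs $(b_sp_s)^{1/p_s}\notin K$, but the valuation count above breaks down exactly when $p_s$ divides some earlier $n_i$, so that $K$ is itself wildly ramified at $p_s$ and $e(\mathfrak p/p_s)$ is divisible by $p_s$. To finish this wild case I would compare finer ramification invariants at $p_s$: the extension $\bbQ_{p_s}\bigl((b_sp_s)^{1/p_s}\bigr)$ is obtained by adjoining a $p_s$‑th root of a uniformizer and has different exponent $2p_s-1$, whereas the constituents of $K_{\mathfrak p}$ are obtained from $p_s$‑power roots of units and carry a different ramification filtration, and I would argue that the former cannot embed into the compositum of the latter. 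Alternatively, one can run Besicovitch's original elementary argument: write a hypothetical $(b_sp_s)^{1/p_s}$ in the monomial basis of $K$ provided by the induction, raise it to the $p_s$‑th power, reduce exponents via $a_i^{n_i}=b_ip_i$, and extract a contradiction from a $p_s$‑adic congruence on the coefficients. Everything up to this point is routine; this wild‑ramification step is the main obstacle, and it is what makes the statement a genuine theorem rather than a formality.
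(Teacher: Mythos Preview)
The paper does not prove this theorem at all: it is stated as a classical result and attributed to Besicovitch's 1940 paper, with no argument given. So there is no ``paper's proof'' to compare against; you are supplying a proof where the paper simply cites one.

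On the merits of your sketch: the reduction to the degree identity $[\bbQ(a_1,\dots,a_s):\bbQ]=n_1\cdots n_s$ is correct, as is the base case via Eisenstein and the invocation of the Vahlen--Capelli criterion (your dismissal of the $-4F^4$ case is fine since $K\subset\bbR$ and $b_sp_s>0$). Your ramification argument for the tame case $q\neq p_s$ is also correct: each $b_ip_i$ with $i<s$ is a $p_s$-adic unit, adjoining an $m$-th root of a unit with $\gcd(m,p_s)=1$ is unramified, and the remaining $p_s$-power part forces $e(\mathfrak p/p_s)$ to be a $p_s$-power, which blocks an embedding of a degree-$q$ totally ramified extension.

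The genuine gap is exactly where you flag it: the wild case $q=p_s$. Your two proposed routes (comparing higher ramification invariants, or rerunning Besicovitch's elementary coefficient argument) are both plausible, but neither is carried out, and this is precisely the step that carries the content of the theorem. In particular, the different-exponent comparison you hint at would need to be made precise for towers and composita, not just for a single $\bbQ_{p_s}\bigl((b_sp_s)^{1/p_s}\bigr)$, since $K_{\mathfrak p}$ may itself be a nontrivial wild extension. So your write-up is an honest and largely correct outline that stops short of the main difficulty; it is not a complete proof, but the paper does not ask for one.
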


\begin{appcor}[see \cite{RICHARDS1974268} for a proof based on Galois theory]\label{cor:Besicovitch}
	Let $n,d\in\bbN$. For distinct prime numbers $p_1, \dots,p_n$, the following $d^n$ algebraic numbers are linearly independent over $\bbQ$:
	\begin{align}
	\phi(j):=\left(\prod\limits_{k=1}^{n}p_k^{j_k}\right)^{\frac{1}{d}},\quad j\in\bbZ_d^n.
	\end{align}
\end{appcor}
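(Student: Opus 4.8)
The plan is to deduce the corollary directly from Besicovitch's theorem (Theorem~\ref{thm:Besicovitch}). The key observation is that each $\phi(j)$ is nothing but a monomial in the numbers $p_1^{1/d},\ldots,p_n^{1/d}$ with all exponents confined to $\{0,1,\ldots,d-1\}$, so that a nontrivial $\bbQ$-linear relation among the $\phi(j)$ is the same thing as a nonzero polynomial of bounded per-variable degree vanishing at the point $(p_1^{1/d},\ldots,p_n^{1/d})$. As a preliminary remark, each $\phi(j)$ is algebraic, being a root of $x^d-\prod_{k=1}^n p_k^{j_k}\in\bbQ[x]$.

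Next I would argue by contradiction. Suppose $\sum_{j\in\bbZ_d^n}c_j\phi(j)=0$ with coefficients $c_j\in\bbQ$ not all zero. Writing $a_k:=p_k^{1/d}$ and using $\phi(j)=\prod_{k=1}^n a_k^{j_k}$, this relation becomes $P(a_1,\ldots,a_n)=0$ for the polynomial $P(x_1,\ldots,x_n):=\sum_{j\in\bbZ_d^n}c_j\prod_{k=1}^n x_k^{j_k}\in\bbQ[x_1,\ldots,x_n]$, which by construction has degree at most $d-1$ in each variable and is not the zero polynomial. I would then invoke Besicovitch's theorem with $s=n$, $n_i=d$ and $b_i=1$ for every $i$ (note that $1$ is a positive integer divisible by none of the $p_i$), so that the numbers $a_i=(b_ip_i)^{1/n_i}=p_i^{1/d}$ match its setup exactly and the per-variable degree bound there, $n_i-1=d-1$, matches the multi-index range $\bbZ_d^n$. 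The theorem then forces $P=0$, contradicting the choice of the $c_j$. Hence no nontrivial $\bbQ$-linear relation exists, and the $d^n$ numbers $\phi(j)$ are linearly independent over $\bbQ$.

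I do not expect a genuine obstacle here: the only thing to be careful about is the combinatorial bookkeeping---identifying the monomial exponent range $\{0,\ldots,d-1\}^n$ with the per-variable degree bounds in Besicovitch's statement---and checking that the degenerate choice $b_i=1$ is admissible, which it is. (Alternatively, as noted after the statement, one can prove the corollary by a Galois-theoretic computation showing $[\bbQ(p_1^{1/d},\ldots,p_n^{1/d}):\bbQ]=d^n$, whence the $d^n$ monomials $\prod_k (p_k^{1/d})^{j_k}$ form a $\bbQ$-basis of this field and are in particular linearly independent; cf.~\cite{RICHARDS1974268}.)
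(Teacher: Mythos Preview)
Your proof is correct and follows essentially the same route as the paper's: both argue by contradiction, set $b_i=1$ and $n_i=d$ in Besicovitch's theorem, and identify a nontrivial $\bbQ$-linear relation among the $\phi(j)$ with a nonzero polynomial of per-variable degree at most $d-1$ vanishing at $(p_1^{1/d},\ldots,p_n^{1/d})$. Your write-up simply spells out the bookkeeping (and the algebraicity of $\phi(j)$) in more detail than the paper's one-line sketch.
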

\begin{proof}
    This follows immediately from Besicovitch's theorem (\ref{thm:Besicovitch}) when setting $b_i=1$ and arguing by contradiction: suppose there would be a non-trivial linear relation of the form $\sum_{j\in\bbZ_d^n} c_j \phi(j)=0$ with $c_j\in\bbQ$, then a non-zero polynomial $P$ of the form that is excluded by Thm.\ref{thm:Besicovitch} would exist.
\end{proof}
For the following theorem, recall that the \emph{degree} of an algebraic number is the minimal degree of a monic polynomial $p\in\bbQ[x]$ that has the number as a root.

\begin{apptheorem}[Diaz \cite{Diaz}, Philippon \cite{Philippon}]\label{thm:DiazPhilippon}
If $\alpha\neq 0,1$ is algebraic and $\beta\in\overline{\bbQ}$ has degree  $d\geq 2$, then $S:=\{\alpha^\beta , \ldots, \alpha^{\beta^{d-1}}\}$ has $\gamma(S)\geq d/2$.
\end{apptheorem}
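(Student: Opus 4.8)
This is a deep result in transcendental number theory --- the case $d=2$ is exactly the Gelfond--Schneider theorem (transcendence of $\alpha^{\beta}$) and $d=3$ recovers Gel'fond's algebraic independence of $\alpha^{\beta}$ and $\alpha^{\beta^{2}}$ --- so I would not expect an elementary argument. The plan is to run the classical Gel'fond--Baker transcendence machinery in the ``small transcendence degree'' regime. Fix a branch $\lambda:=\log\alpha\neq 0$ and put $K:=\bbQ(\beta)$, a degree-$d$ number field in which $1,\beta,\ldots,\beta^{d-1}$ is a $\bbQ$-basis, so that $\bbZ[\beta]=\bbZ+\bbZ\beta+\cdots+\bbZ\beta^{d-1}$ is a rank-$d$ subgroup of $(\bbC,+)$ stable under multiplication by $\beta$. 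The structural point is that for any $m=\sum_{k=0}^{d-1}m_k\beta^{k}\in\bbZ[\beta]$ one has $e^{\lambda m}=\prod_{k=0}^{d-1}(\alpha^{\beta^{k}})^{m_k}$, which lies in the field generated by $S\cup\{\alpha\}$; this is what ties the analytic function $z\mapsto e^{\lambda z}$ to the set $S$.

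Assume for contradiction that $t:=\gamma(S)<d/2$, fix a transcendence basis $\theta_1,\ldots,\theta_t$ of $\bbQ(S)$, and note that each $\alpha^{\beta^{k}}$ is then algebraic over $\bbQ(\underline\theta)$ with degree and (relative) height under control. \textbf{Step 1 (auxiliary function).} For a large parameter $N$, use Siegel's lemma over $\bbQ(\underline\theta)$ to build a nonzero polynomial $P_N\in\bbZ[X,Y]$ with partial degrees and logarithmic height polynomially bounded in $N$ such that $F_N(z):=P_N(z,e^{\lambda z})$ vanishes to high order at every point of a combinatorial box inside $\bbZ[\beta]$; since $e^{\lambda m}$ is algebraic over $\bbQ(\underline\theta)$ for $m\in\bbZ[\beta]$, the linear system defining $P_N$ has more unknowns than conditions precisely when $t$ is small. \textbf{Step 2 (extrapolation).} Iterating a Schwarz-lemma / maximum-modulus bound together with a Liouville (``fundamental inequality'') estimate --- and using crucially that $\beta\cdot\bbZ[\beta]\subseteq\bbZ[\beta]$, so the boxes may be dilated by $\beta$ --- propagates the vanishing of $F_N$ to larger and larger subsets of $\bbZ[\beta]$. \textbf{Step 3 (zero estimate).} Eventually this forces $F_N$ to vanish with more multiplicity on more points than a multiplicity/zero estimate on the commutative algebraic group $\mathbb{G}_{\mathrm a}\times\mathbb{G}_{\mathrm m}$ (Philippon's estimate) permits, \emph{unless} $P_N$ is degenerate --- and degeneracy is excluded because $\beta$ has degree \emph{exactly} $d$, i.e.\ $\bbZ[\beta]$ has no proper $\beta$-stable sub-object of the relevant type. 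The quantitative clash of the four competing inequalities (unknowns vs.\ conditions vs.\ smallness after extrapolation vs.\ Liouville lower bound), optimized over the shape of the boxes and the vanishing order, is what pins the threshold at $t\ge d/2$.

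To turn ``contradiction under an explicit assumption'' into the clean statement, I would phrase Steps 1--3 as producing, for $N\to\infty$, a sequence of integer polynomials of controlled size and height taking nonzero but very small values at the point $(\beta,\alpha,\alpha^{\beta},\ldots,\alpha^{\beta^{d-1}})$, and then invoke Philippon's criterion for algebraic independence (equivalently, the Nesterenko--Philippon small-transcendence-degree form of Gel'fond's criterion), which reads off $\gamma(S)\ge d/2$ directly from the growth rates of degree, height and smallness. The main obstacle --- the technical core of the Diaz--Philippon theorem --- is Step 3: establishing the sharp zero/multiplicity estimate for subgroups of the shape $\bbZ[\beta]\lambda$ in $\mathbb{G}_{\mathrm a}\times\mathbb{G}_{\mathrm m}$ and matching it with an algebraic-independence criterion precise enough to yield the factor $1/2$ rather than a weaker constant; the remaining ingredients (Siegel's lemma, the Schwarz lemma, height arithmetic over $\bbQ(\underline\theta)$) are, in this subject, standard though lengthy bookkeeping that I would not spell out here.
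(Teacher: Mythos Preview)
The paper does not prove this theorem at all: it is quoted in the appendix as a known result from transcendental number theory, with citations to Diaz and Philippon, and is used as a black box in the proof of Thm.~\ref{thm:largesets}(2). So there is no ``paper's own proof'' to compare your proposal against.

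For what it is worth, your sketch is a faithful high-level summary of the Gel'fond--Baker/Philippon machinery (auxiliary function via Siegel's lemma, extrapolation by Schwarz-type estimates, a zero/multiplicity estimate on $\mathbb{G}_{\mathrm a}\times\mathbb{G}_{\mathrm m}$, and finally an algebraic-independence criterion) that underlies the cited result, and you correctly identify that the constant $1/2$ comes from the sharp zero estimate combined with Philippon's criterion. But a full proof is far beyond the scope of this paper, which simply imports the statement.
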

In fact, according to the Gel'fond-Schneider conjecture (see Chap.24 in \cite{MurtyRath}) it might be $\gamma(S)\geq d-1$.

\note{appendix on  stationary states?}

\newpage
\bibliographystyle{halpha}
{\footnotesize\bibliography{biblio}}
\vspace{0.3cm}
\end{document}